\DeclareUrlCommand\path{\urlstyle{same}}
\title{New Results in Sona Drawing: Hardness and TSP Separation}
\author{%
  Man-Kwun Chiu%
    \thanks{Institut f\"ur Informatik, 
    	Freie Universit\"at Berlin, Germany,
    \protect\url{chiumk@inf.fu-berlin.de}.
    This work was supported in part by ERC StG 757609.}
\and
  Erik D. Demaine%
    \thanks{CSAIL, Massachusetts Institute of Technology, USA, \protect\url{{edemaine,diomidov,achester}@mit.edu}}
\and
  Yevhenii Diomidov%
    \footnotemark[2]
\and
  David Eppstein%
    \thanks{Computer Science Department,
        University of California, Irvine, \protect\url{eppstein@uci.edu}. This work was supported in part by the US National Science Foundation under grant CCF-1616248.}
\and
  Robert A. Hearn%
    \thanks{\protect\url{bob@hearn.to}}
\and
  Adam Hesterberg%
    \footnotemark[2]
\and
  Matias Korman%
    \thanks{Tufts University, USA, \protect\url{matias.korman@tufts.edu}}
\and
  Irene Parada%
    \thanks{TU Eindhoven, The Netherlands, \protect\url{i.m.de.parada.munoz@tue.nl}}
\and
  Mikhail Rudoy%
    \thanks{CSAIL, MIT, USA. Now at Google Inc.}
}
\newif\ifabstract
\newif\iffull
\newcounter{section-preserve}
\newcounter{theorem-preserve}
\newcommand{\blank}[1]{}
\newtoks\magicAppendix
\newtoks\magictoks
\newif\iflater
\long\def\later#1{\magictoks={#1}%
  \edef\magictodo{\noexpand\magicAppendix={\the\magicAppendix \par
    \the\magictoks%
  }}
  \magictodo}
\long\def\both#1{\magictoks={#1}%
  \edef\magictodo{\noexpand\magicAppendix={\the\magicAppendix \par
    \noexpand\setcounter{theorem-preserve}{\noexpand\arabic{theorem}}%
    \noexpand\setcounter{theorem}{\arabic{theorem}}%
    \noexpand\setcounter{section-preserve}{\noexpand\arabic{section}}%
    \noexpand\setcounter{section}{\arabic{section}}%
    \noexpand\let\noexpand\oldsection=\noexpand\thesection
    \noexpand\def\noexpand\thesection{\thesection}
    \noexpand\let\noexpand\oldlabel=\noexpand\label
    \noexpand\let\noexpand\label=\noexpand\blank
    \the\magictoks%
    \noexpand\setcounter{theorem}{\noexpand\arabic{theorem-preserve}}%
    \noexpand\setcounter{section}{\noexpand\arabic{section-preserve}}%
    \noexpand\let\noexpand\thesection=\noexpand\oldsection
    \noexpand\let\noexpand\label=\noexpand\oldlabel
  }}
  \magictodo
  \the\magictoks}
\def\magicappendix{\latertrue \the\magicAppendix}
\makeatletter \hypersetup{pdftitle={\@title}}}
 \gdef\xxxmark{%
   \expandafter\ifx\csname @mpargs\endcsname\relax % in minipage?
     \expandafter\ifx\csname @captype\endcsname\relax % in figure/caption?
       \marginpar{xxx}% not in a caption or minipage, can use marginpar
     \else
       xxx % notice trailing space
     \fi
   \else
     xxx % notice trailing space
   \fi}
 \gdef\xxx{\@ifnextchar[\xxx@lab\xxx@nolab}
 \long\gdef\xxx@lab[#1]#2{\textbf{[\xxxmark #2 ---{\sc #1}]}}
 \long\gdef\xxx@nolab#1{\textbf{[\xxxmark #1]}}
\gdef\fps@figure{!htbp}}
\let\realbfseries=\bfseries
\def\bfseries{\realbfseries\boldmath}
\def\compactify{\itemsep=0pt \topsep=0pt \partopsep=0pt \parsep=0pt}
\let\latexusecounter=\usecounter
\newenvironment{enumerate*}
  {\def\usecounter{\compactify\latexusecounter}
   \begin{enumerate}}
  {\end{enumerate}\let\usecounter=\latexusecounter}
\let\epsilon=\varepsilon
\def\defn#1{\textbf{\textit{\boldmath #1}}}
\let\emph=\defn
\def\mid{\mathrel{:}}
\def\TSP{\mathrm{TSP}}
\newcommand{\irene}[1]{{\color{red}I.P: #1}}
\begin{document}
\maketitle

\begin{center}\sl
  In memoriam Godfried Toussaint (1944--2019)
\end{center}

\begin{abstract}
Given a set of point sites, a sona drawing is a single closed curve, disjoint from the sites and intersecting itself only in simple crossings, so that each bounded region of its complement contains exactly one of the sites.
We prove that it is NP-hard to find a minimum-length sona drawing for $n$ given points, and that such a curve can be longer than the TSP tour of the same points by a factor $> 1.5487875$.  When restricted to tours that lie on the edges of a square grid, with points in the grid cells, we prove that it is NP-hard even to decide whether such a tour exists. These results answer questions posed at CCCG 2006.
\end{abstract}

\section{Introduction}

In April 2005, Godfried Toussaint visited the second author at MIT,
%bringing with him the idea of analyzing
where he proposed a computational geometric analysis of the ``sona''
sand drawings of the Tshokwe people in the West Central Bantu area of Africa.
Godfried encountered sona drawings, in particular the ethnomathematical
work of Ascher \cite{ascher-02} and Gerdes \cite{gerdes-99},
during his research into African rhythms.
Together with his then-student Perouz Taslakian,
we came up with a formal model of \defn{sona drawing}
of a set $P$ of point \defn{sites} ---
a closed curve drawn in the plane such that
\begin{enumerate}\itemsep=0pt
\item wherever the curve touches itself, it crosses itself;
\item each crossing involves only two arcs of the curve;
\item exactly one site is in each bounded face formed by the curve; and
\item no sites lie on the curve or within its outside face.
\end{enumerate}

Our first paper on sona drawings appeared at BRIDGES 2006 \cite{DDTT07},
detailing the related cultural practices, proving and computing
combinatorial results and drawings, and posing several open problems.
In early 2006, we brought these open problems to Godfried's
Bellairs Winter Workshop on Computational Geometry, where a much larger group
tackled sona drawings, resulting in a CCCG 2006 paper later the same year
\cite{DDD06}.  Next we highlight some of the key prior results and
open problems as they relate to the results of this paper.

\paragraph{Sona vs.\ TSP.}
Every TSP tour can be easily converted into a sona drawing of roughly the same
length: instead of visiting a site, loop around it, except for one site that
we place slightly interior to the tour \cite[Lemma~11]{DDTT07}.
Conversely, every sona drawing can be converted into a TSP tour of length
at most a factor ${\pi+2 \over \pi} \approx 1.63661977$ larger
\cite[Theorem~12]{DDD06}, settling \cite[Open Problem~6]{DDTT07}.
Is this constant tight?
The best previous lower bound was a four-site example proving a TSP/sona
separation factor of ${2 \over 3} + {2 \sqrt 3 \over 9} \approx 1.05156685$
\cite[Lemma~12]{DDTT07}.
In \autoref{sec:separation},
we construct a recursive family of examples proving a
much larger TSP/sona separation factor of
$\frac{14 + 8 \sqrt{2} + \pi(\sqrt{2}+1)}{8 + 4\sqrt{2} + \pi(\sqrt{2}+1)}
\approx 1.54878753$.
We also study $L_1$ and $L_\infty$ metrics, where we prove that the worst-case
TSP/sona separation factor is exactly $1.5$.

\paragraph{Length minimization.}
The relation to TSP implies a constant-factor approximation algorithm for
finding the minimum-length sona drawing on a given set of sites.
But is this problem NP-hard?  In \autoref{sec:min-length},
we prove NP-hardness for $L_1$, $L_2$, and $L_\infty$ metrics,
settling \cite[Open Problem~5]{DDTT07} and
\cite[Open Problem~4]{DDD06}.

\paragraph{Grid drawings.}
The last variant we consider is when the sona drawing is restricted to
lie along the edges of a unit-square grid, while sites are at the centers of
cells of the grid.
Not all point sets admit a grid sona drawing;
however, if we scale the sites' coordinates by a factor of $3$,
then they always do \cite[Proposition~10]{DDD06}.
A natural remaining question \cite[Open Problem~3]{DDD06}
is which point sets admit grid sona drawings.
In \autoref{sec:grid}, we prove that this question is in fact NP-hard.

\section{Separation from TSP Tour}
\label{sec:separation}
\later{\section{\boldmath Separation from TSP Tour under the $L_2$ Metric} \label{app:separationL2}}

We first show an example which gives a large TSP/sona
separation factor under the $L_2$ metric in the plane.

\both{
\begin{theorem} \label{thm:ratio}
There exists a set of sites for which the length of the  minimum-length TSP tour is $\frac{14 + 8 \sqrt{2} + \pi(\sqrt{2}+1)}{8 + 4\sqrt{2} + \pi(\sqrt{2}+1)} \approx 1.54878753$ times the length of the minimum-length sona drawing.
\end{theorem}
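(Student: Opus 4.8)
The plan is to exhibit an explicit, self-similar family of point sets $P_1,P_2,\dots$ and to read off the claimed constant as the limit $\lim_k \TSP(P_k)/\mathrm{sona}(P_k)$; since the closed form contains $\pi$, which no finite polygonal tour can produce on its own, the extremal example is really the fractal limit in which the sites accumulate densely along a fixed union of arcs and segments. Because the statement is a separation, the two estimates that carry the weight are an \emph{upper} bound on the minimum sona length and a \emph{lower} bound on the minimum TSP length; matching bounds in the opposite directions then upgrade the inequality to the exact equality claimed.

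First I would fix the base gadget by reverse-engineering the two length expressions. The shared term $\pi(\sqrt2+1)$ must come from arcs that both optimal curves are forced to trace: the sona curve traces them because a free-form closed curve minimizes length by hugging the sites along circular arcs, and the TSP tour traces them because the sites lie densely on those same arcs, so any chord skipping an arc is paid for by the omitted sites. The $\sqrt2$ factors point to $45^\circ$ diagonals of unit grid squares, so I would lay the sites on a square grid with diagonal connectors, arranged so that a single self-crossing curve can enclose each site in its own face using the arcs together with short straight connectors of total length $8+4\sqrt2$, while any simple cycle through the same points is forced into straight travel of total length $14+8\sqrt2$.

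For the sona upper bound I would simply display the curve: trace each arc once, pinch off a petal around each site with a single transversal crossing, and recurse into the scaled copies, checking conditions (1)--(4) and summing the self-similar contributions as a geometric series to obtain $8+4\sqrt2+\pi(\sqrt2+1)$ after normalization. Once the gadget is fixed this direction is essentially bookkeeping, as is the matching TSP upper bound obtained by exhibiting one explicit tour.

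The hard part will be the TSP \emph{lower} bound: showing that the extra detour of $6+4\sqrt2$ over the sona curve cannot be amortized away by clever global routing. I would first use the density of sites to force any tour to hug each arc, reducing the continuous problem to a finite combinatorial one --- how cheaply can a single non-self-crossing cycle stitch together the arc-pieces and the recursive sub-copies? Using that a Euclidean optimum may be taken simple (uncrossing never increases length), I would enumerate the connection patterns compatible with the self-similar layout, show each pays at least the claimed surplus, and then propagate the bound through the recursion. The delicate point, and the reason a recursive family rather than a single finite configuration is needed, is ruling out savings from \emph{interleaving} the routing across different scales; establishing that the per-level lower bound is genuinely global is where I expect the real work to lie, together with the companion argument that no sona drawing beats $8+4\sqrt2+\pi(\sqrt2+1)$, which is needed only for the exact constant and not for the separation itself.
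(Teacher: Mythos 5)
Your proposal contains the right scaffolding (packed sites forcing both curves to trace a fixed union of segments and circular arcs, sona upper bound by an Eulerian-style traversal with small petals, geometric series over a self-similar recursion), but it is missing the one idea that actually produces the separation in the paper, and the mechanism you substitute for it does not work. In the paper's construction the gap between TSP and sona does \emph{not} come from the TSP tour being a simple cycle while the sona curve may self-cross. Simplicity is not a constraint that costs the TSP anything: the optimum is a minimum over cyclic orders of the sites, and uncrossing never increases length, so ``any simple cycle through the same points is forced into straight travel'' cannot be the source of a surplus. Indeed, if \emph{all} sites lie densely on the packed curves, as your gadget seems to arrange, then the curve graph is Eulerian and a TSP tour can follow an Eulerian circuit just as the sona drawing does, so the ratio tends to $1$, not $1.5487\ldots$. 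The paper's separation instead exploits the asymmetry that a sona drawing never needs to \emph{visit} a site---it only needs the site to sit alone in some face---whereas a TSP tour must pass through every site. The construction therefore adds \emph{isolated} sites $P_i$ placed at the centers of faces already formed by the packed curves (so the sona drawing encloses them for free), each at distance $r_i=\frac12(1+\sqrt2)^{-2i}$ from everything else. The TSP lower bound is then local and essentially trivial: the ball of radius $r_i$ around such a site contains no other site and no packed curve, so the tour pays a disjoint, additive detour of $2r_i$ per isolated site. With $4\cdot 5^{i-1}$ such sites per lattice point at level $i$ and scale factor $(1+\sqrt2)^{-2}$ (chosen so that $5(1+\sqrt2)^{-2}<1$), the detours sum to the constant $2\sqrt2+3$ per lattice point, on top of the shared packed-curve length $2+\frac{4+\pi}{2\sqrt2-2}$, giving the stated ratio.

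This also resolves the two places where you predicted ``the real work'' would lie, in both cases incorrectly. The cross-scale interleaving problem you propose to attack by enumerating connection patterns simply does not arise: the empty balls around isolated sites are pairwise disjoint and disjoint from the packed curves, so the $2r_i$ costs add to the tracing cost with no global argument needed. And the sona lower bound you flag as a companion difficulty is equally cheap: consecutive sites packed at spacing $\delta$ along a curve must lie in distinct faces, so any sona drawing passes within $\delta$ of every point of every packed curve and hence has length at least $(1-\epsilon)$ times the total curve length. Finally, a smaller misreading: the constants $8+4\sqrt2$ and $14+8\sqrt2$ are not edge lengths of a single base gadget to be reverse-engineered; they arise only after summing the geometric series over all scales (and the theorem's constant is attained in the limit $\epsilon\to 0$, with each finite instance achieving the ratio within $\epsilon$). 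As written, your plan would produce a construction whose TSP and sona lengths agree to first order, and no enumeration of stitching patterns could recover the claimed surplus.
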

}
\begin{figure*}
  \centering
  \subcaptionbox{First step of our construction for $\epsilon^{-1}=2$: points of $A_0$ lie in the intersection of solid lines (packed segments). In the construction, $P_1$ contains thirteen sites (shown as black dots).
  \label{fig:minsona_a}}{\includegraphics[width=0.47\linewidth]{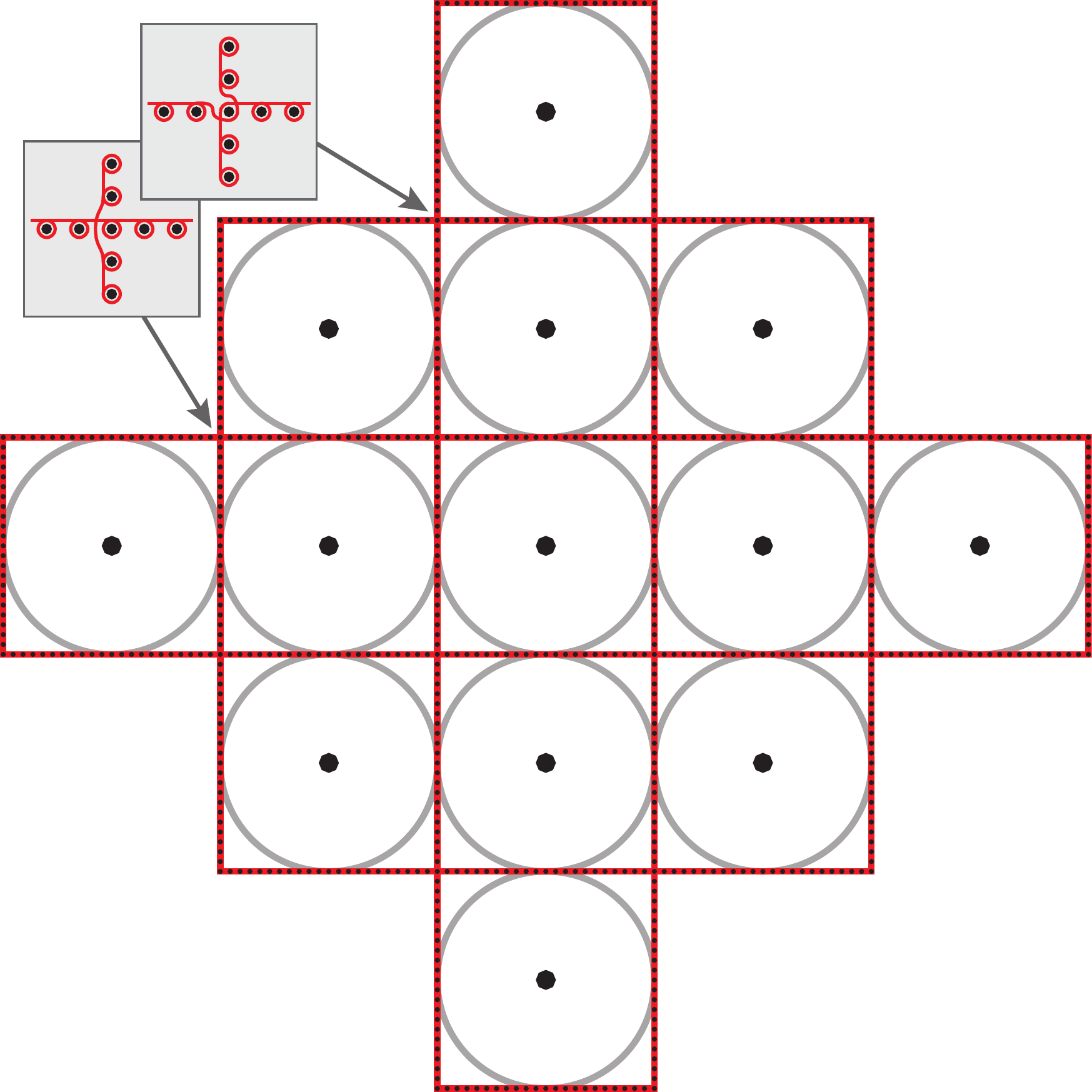}}\hfill
  \subcaptionbox{Final construction for $\epsilon^{-1}=2$. Sets $P_1$, $P_2$, and $P_3$ are shown as black dots of varying sizes. Additional sites of $P^{(2)}$ pack lines and rounded squares nearby the auxiliary points of $A_0$, $A_1$, and $A_2$.\label{fig:minsona_b}}{\includegraphics[width=0.47\linewidth]{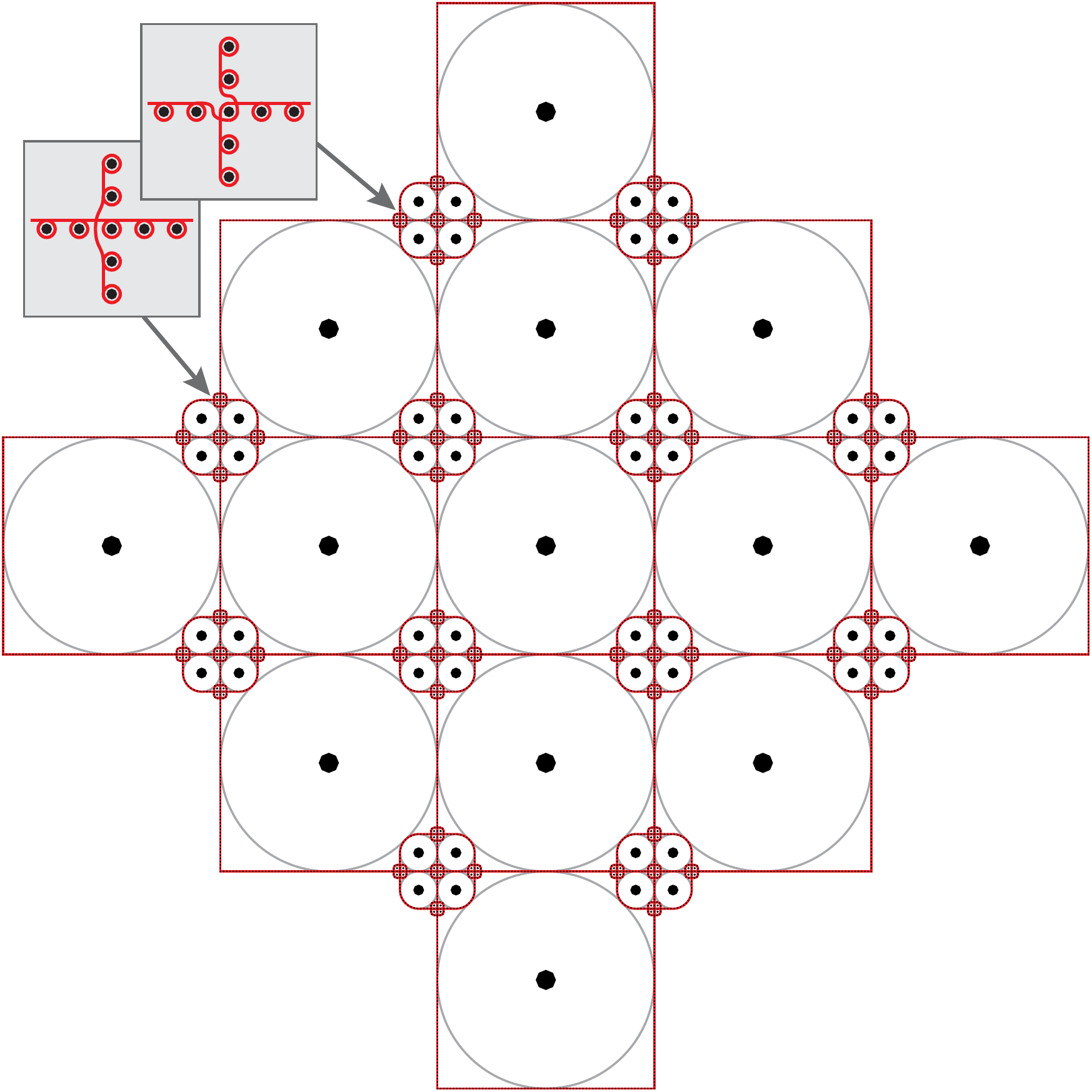}}
  \caption{Recursive construction of sites requiring
    %$\frac{14 + 8 \sqrt{2} + \pi(\sqrt{2}+1)}{8 + 4\sqrt{2} + \pi(\sqrt{2}+1)}
    $\approx 1.54878753$ factor shorter sona tour (drawn in red)
    compared to TSP tour (red plus doubled radius of each grey circle).
    All red lines have black sites sprinkled densely along them.}
  \label{fig:minsona}
\end{figure*}
The full proof can be found in Appendix~\ref{app:separationL2}. 
\medskip

\later{\begin{proof}}
\noindent\textbf{Sketch of Proof.}%
\both{%
We construct a problem instance whose minimum-length TSP tour is longer than its minimum-length sona drawing by a factor within $\epsilon$ of $\frac{14 + 8 \sqrt{2} + \pi(\sqrt{2}+1)}{8 + 4\sqrt{2} + \pi(\sqrt{2}+1)}$. Our construction is illustrated in Figure~\ref{fig:minsona} and follows a fractal approach with $\epsilon^{-1}$ levels (for simplicity, we assume that $\epsilon^{-1}$ is an integer).

\begin{enumerate}
\item We start by defining a few auxiliary points:
\begin{enumerate}
\item The initial set of auxiliary points $A_0$ is the intersection between a slightly shifted integer lattice and the $L_1$ ball $B(0,\epsilon^{-1})$ of radius $\epsilon^{-1}$ centered at the origin. That is, $A_0=\left\{(\frac{2i+1}{2},\frac{2j+1}{2}) \mid i,j \in \mathbb{Z}\right\} \cap \left\{(x,y) \mid |x|+|y| \leq \epsilon^{-1}\right\}$. In Figure~\ref{fig:minsona_a}, the auxiliary points are exactly the intersections of solid red lines. 
\item Then, in Step $i$ (starting with $i=1$), for each auxiliary point $p\in A_{i-1}$, we add five points to $A_i$: $p$ itself, and four new points at distance $\left(\frac{1}{1+\sqrt{2}}\right)^{2i}$ from $p$ in each of the four cardinal directions. Let $A= A_{\epsilon^{-1}}$. By construction, we have $|A_i|=5^i|A_0|$ and $|A_0|=2\epsilon^{-2} + O(\epsilon^{-1})$. We note that set $A$ contains auxiliary points (not sites). These points will not be part of the instance.
\end{enumerate}

\item We now use the auxiliary points to create some sites (the isolated black points of Figure~\ref{fig:minsona}):
\begin{enumerate}
\item For any $i\geq 1$ we define set $P_i$ of sites as follows: for each auxiliary point $q\in A_{i-1}$ we add the four sites 
 whose $x$ and $y$ coordinates each differ from $q$ by $\frac12 \left(\frac{1}{1+\sqrt{2}}\right)^{2i}$. 
 We note that all the added sites are distinct sites. 
\item We define $P_0$ as the set of integer lattice points in $B(0,\epsilon^{-1})$. Equivalently, for each auxiliary point $q\in A_{0}$ we add the sites whose $x$ and $y$ coordinates each differ from $q$ by $\frac12$, but we do not add sites that lie outside $B(0,\epsilon^{-1})$, which affects $O(\epsilon^{-1})$ sites (out of $\Omega(\epsilon^{-2})$ sites of $P_0$). 
 
%In terms of cardinality, observe that for $i \ge 1$, we add four sites per auxiliary point $q \in A_{i-1}$. The only exception is when $i = 0$:  
%We note that, 
In this case, the sites created by different auxiliary points may lie in the same spot. In total, $P_0$ contains only one site per auxiliary point of $A_0$ (except for $O(\epsilon^{-1})$ auxiliary points near the boundary). Thus, $|P_i|=4\cdot|A_{i-1}|=4\cdot 5^{i-1}\cdot |A_0|$ (for $i\geq 1$) and $|P_0|=|A_0|+O(\epsilon^{-1})=(2\epsilon^{-2} + O(\epsilon^{-1}))$. 
\end{enumerate}

Let $P^{(1)}=P_0 \cup P_1 \cup \cdots \cup P_{\epsilon^{-1}}$. 

\item\label{pack} Next, we place additional sona sites packing line segments and/or curves. Whenever we pack any curve, we place sites spaced at a distance $\delta$ small enough that the length of the shortest path that passes within $\delta$ of all of them is within a factor $(1-\epsilon)$ of the length of the curve. 
\begin{enumerate}
\item Solid lines as drawn in red in Figure~\ref{fig:minsona_a}: for each $x \in \{i+\frac{1}{2} \mid  -(\epsilon^{-1}+1) \leq i \leq \epsilon^{-1} \mbox{ and } i \in \mathbb{Z}\}$,
%\cap \{(x,y) \mid |x|+|y|< \epsilon^{-1}\}$
we pack the vertical line segment with endpoints $(x, \epsilon^{-1}+1 - |x|)$ and $(x, |x| - \epsilon^{-1}-1)$, and analogously with $y$ for the horizontal line segments.
%with $x$ and $y$ switched.
%are related to $\in A_0$: for any two {\em adjacent} points $p,q\in A_0$ (that is, they share the $x$ coordinate and $y$ coordinates differ by $1/2$ or {\em vice versa}), we pack the line segment with endpoints $p$ and $q$. Note that every point of $A_0$ lies at the intersection of packed segments (see Figure~\ref{fig:minsona}, left). \textcolor{red}{Mati sats: Error, this is not totally accurate}
\item In Step $i$ of the above recursive definition (starting with $i=1$), when we create four new auxiliary points of $A_i$ from a point $p\in A_{i-1}$, we also pack the boundary of the region within Euclidean distance $\frac12 \left(\frac{1}{1+\sqrt{2}}\right)^{2i}$ of the square whose vertices are the four auxiliary points of~$A_i$. Note that this boundary region forms a square with rounded corners as in Figure~\ref{fig:minsona_b}. With these extra points we preserve the invariant that the auxiliary points are exactly the intersections of packed curves.
\end{enumerate}
\end{enumerate}

Let $P^{(2)}$ be the set of sites created in Step \ref{pack} in our construction, and $P=P^{(1)} \cup P^{(2)}$. This is a complete description of the construction.}

{
In the full proof we show that the length of the packed curves is a $(1+\epsilon)$-approximation of the total length of the minimum-length sona drawing of $P$. 
Careful calculations then yield that the length of the minimum-length sona drawing is $(2\epsilon^{-2} + O(\epsilon^{-1})) \left(2 + \frac{4 + \pi}{2\sqrt{2}-2}\right).$
We then argue that the minimum-length TSP has an additional length of $(2\epsilon^{-2} + O(\epsilon^{-1}))(2\sqrt{2}+3)$. 
Thus, the TSP/sona separation factor for the construction is, ignoring lower-order terms, $\frac{2 + \frac{4 + \pi}{2\sqrt{2}-2} + 2 \sqrt{2} + 3}{2 + \frac{4 + \pi}{2\sqrt{2}-2}} = \frac{14 + \pi(\sqrt{2}+1) + 8 \sqrt{2}}{8 + 4\sqrt{2} + \pi(\sqrt{2}+1)} \approx 1.54878753$.
}

\later{We now find the minimum-length sona drawing of $P$. 
Each pair of consecutive points in a packed curve must be in a separate sona region, so any sona drawing must pass between them; in particular, any sona drawing must pass within $\delta$ of each of them, and so the length of any valid sona drawing is at least $1-\epsilon$ times the length of the packed curves. Also, there's a valid sona drawing that's at most $1+\epsilon$ times the length of the packed curves: follow all the packed curves exactly, adding small loops around the sona sites of the packed curves as necessary (loops small enough to lengthen the curve by a factor of at most $1 + \epsilon$). The graph of packed curves is Eulerian (because it's defined as a union of boundaries of regions, which are cycles), so the TSP tour can follow an Eulerian circuit through it. At an intersection of packed curves, we have two options for the sona drawing (as shown in the inset images of Figure~\ref{fig:minsona}). We can have one sona path cross over the other in the Eulerian circuit (by including every site of the packed curve in a small loop). Alternatively, we can have one sona path cross over the other in two places $p$ and $q$ at the intersection, and leaving one sona site of the packed curve out of a small loop to be the sona site of the extra region between $p$ and $q$. In either case, we conclude that there is a valid sona path that follows an Eulerian circuit of the packed curves within $(1+\epsilon)$. Note that, although our description focused in the sites of $P^{(2)}$, this is a valid sona tour for $P$ since the sites of $P^{(1)}$ lie in different faces.

The total length of the packed curves is hence a $(1+\epsilon)$-approximation of the total length of the minimum-length sona drawing.  

The total length of the packed segments of $P^{(2)}$ (the square lattice) is $4\epsilon^{-2} + O(\epsilon^{-1})$, since the area of the region $|x| + |y| < \epsilon^{-1}$ and the number of lattice points in it are each $2\epsilon^{-2} + O(\epsilon^{-1})$.

Now we bound the length of the packed curves (rounded squares). In Step $i$ of the construction (starting with $i=1$), we added a packed curve that is the boundary of the region within Euclidean distance $\frac12 \left(\frac{1}{1+\sqrt{2}}\right)^{2i}$ of a square of side length $\left(\frac{1}{1+\sqrt{2}}\right)^{2i}$ (with a total length of $(4 + \pi) \left(\frac{1}{1+\sqrt{2}}\right)^{2i}$). 

Recall that we added one such curve for each of the points of $A_{i-1}$ and that $|A_i|=5^{i-1}(2\epsilon^{-2} + O(\epsilon^{-1}))$. Thus, the total length of the sona drawings introduced at Step $i$ is $(4 + \pi) \left(\frac{1}{1+\sqrt{2}}\right)^{2i} 5^{i-1}(2\epsilon^{-2} + O(\epsilon^{-1}))$. 

For $\epsilon$ small, this series is well-approximated by an infinite geometric series with sum 
$(2\epsilon^{-2} + O(\epsilon^{-1})) \left(\frac{4 + \pi}{(1 + \sqrt{2})^2 \cdot\left(1-\frac{5}{(1 + \sqrt{2})^2}\right)}\right) = (2\epsilon^{-2} + O(\epsilon^{-1})) \left(\frac{4 + \pi}{2\sqrt{2}-2}\right)$, and adding in the length of the packed segments of $P^{(2)}$ (the square lattice) gives $(2\epsilon^{-2} + O(\epsilon^{-1})) \left(2 + \frac{4 + \pi}{2\sqrt{2}-2}\right).$

%\textcolor{red}{Mati. I do not see where the $2+$ comes from. My calculation is as follows:
%\begin{eqnarray}
%&=&\sum_{i \geq 1} (2\epsilon^{-2} + O(\epsilon^{-1})) \frac{4+\pi}{(1+\sqrt{2})^2} (\frac{5}{(1+\sqrt{2})^2})^{i-1} \\
%&=&(2\epsilon^{-2} + O(\epsilon^{-1})) \frac{4+\pi}{3+2\sqrt{2}} (\frac{1}{1-\frac{5}{(1+\sqrt{2})^2}})\\
%&=&(2\epsilon^{-2} + O(\epsilon^{-1})) \frac{4+\pi}{2\sqrt{2}-2} \\
%\end{eqnarray}
%}
%The boundary of that region has four straight segments of length $\left(\frac{1}{1+\sqrt{2}}\right)^{2i}$ and four circular arcs making the full circumference of a circle of radius $\frac12 \left(\frac{1}{1+\sqrt{2}}\right)^{2i}$, for a total length of $(4 + \pi) \left(\frac{1}{1+\sqrt{2}}\right)^{2i}$ at each lattice point, or $(4 + \pi) \left(\frac{1}{1+\sqrt{2}}\right)^{2i} (2\epsilon^{-2} + O(\epsilon^{-1}))$ total at Step $i$. So the total length of the sona drawing is $(2\epsilon^{-2} + O(\epsilon^{-1})) \left(2 + (4 + \pi)\sum_{i=1}^{\epsilon^{-1}} 5^{i-1} \left(\frac{1}{1+\sqrt{2}}\right)^{2i} \right)$. For $\epsilon$ small, this series is well-approximated by an infinite geometric series with sum $(2\epsilon^{-2} + O(\epsilon^{-1})) \left(2 + \frac{4 + \pi}{((1 + \sqrt{2})^2)(1-\frac{5}{(1 + \sqrt{2})^2})}\right) = (2\epsilon^{-2} + O(\epsilon^{-1})) \left(2 + \frac{4 + \pi}{2\sqrt{2}-2}\right)$.

We have approximated the minimum length of a valid sona drawing; now we approximate the minimum length of a TSP tour. 

Any TSP tour must also come within $\delta$ of every point on every packed curve, which requires a length at least $(2\epsilon^{-2} + O(\epsilon^{-1})) \left(2 + \frac{4 + \pi}{2\sqrt{2}-2}\right)$ as above.
%\textcolor{red}{Mati: we hsould update if my numbers above are right}. 
Also, the TSP tour must visit each site of $P^{(1)}$. We observe some properties of this set:

\begin{itemize}
\item Set $P^{(1)}$ is defined so that sites are far from each other. Specifically, the Euclidean ball centered at any site $p\in P_i$ of radius $r_i=\frac12 \left(\frac{1}{1+\sqrt{2}}\right)^{2i}$ does not contain other sona sites. This means that we must include at least $2r_i$ in the length of the TSP tour for each point in $P_i$, for the part of the tour that passes from the boundary of this ball to $P_i$ and then back to the boundary.

\item There are $4\cdot 5^{i-1}(2\epsilon^{-2} + O(\epsilon^{-1}))$ sites in $P_i$ (for $i\geq 1$) and $(2\epsilon^{-2} + O(\epsilon^{-1}))$ sites in $P_0$.
\end{itemize}

When $\epsilon$ tends to zero, the additional length needed in the TSP tour is 
\begin{eqnarray*}
&&(2\epsilon^{-2} + O(\epsilon^{-1}))(1+\sum_{i\geq 1} 2r_i \cdot 4\cdot 5^{i-1})\\
&=&(2\epsilon^{-2} + O(\epsilon^{-1})) (1+\frac45 \sum_{i \geq 1} \left(\frac{5}{3+2\sqrt{2}}\right)^{i}) \\
&=&(2\epsilon^{-2} + O(\epsilon^{-1})) (1+\frac{4}{3+2\sqrt{2}} \cdot \frac{1}{1- \frac{5}{3+2\sqrt{2}}})\\
&=&(2\epsilon^{-2} + O(\epsilon^{-1}))(1+\frac{4}{2\sqrt{2}-2}) \\
&=&(2\epsilon^{-2} + O(\epsilon^{-1}))(2\sqrt{2}+3). 
\end{eqnarray*}

%

\iffalse
 In Step $0$ we add, for each lattice point, an isolated point at distance $\frac12$ from every point on a packed curve:
 \irene{I find the phrasing of the next two points confusing.}
\begin{enumerate}
\item It's at distance $\frac12$ from the integer lattice grid. 
\item The square with rounded corners is at most $\frac12\left(\frac{1}{1+\sqrt{2}}\right)^{2}(1+\sqrt{2})$ from a lattice point: $\frac12\left(\frac{1}{1+\sqrt{2}}\right)^{2}(\sqrt{2})$ from the lattice point to the center of a circular arc, and $\frac12\left(\frac{1}{1+\sqrt{2}}\right)^{2}(1)$ from there to the edge of a circular arc. The distance from the isolated point to the lattice point is $\frac12\sqrt{2}$, so the distance to the square with rounded corners is $\frac12 \sqrt{2} - \frac12\left(\frac{1}{1+\sqrt{2}}\right)^{2}(1+\sqrt{2}) = \frac12$.
\end{enumerate}
So visiting that point requires two edges of length $\frac12$. Even without any vertices on packed curves that are visited immediately before or after, the packed curves are densely packed enough that the TSP tour needs to traverse almost their full length.

Similarly, starting with $i = 0$, at Step $i$ we add $4 \cdot 5^i$ isolated vertices per lattice point, each of which is at distance at least $\frac12\left(\frac{1}{1+\sqrt{2}}\right)^{2i+2}$ from any packed edge, requiring the TSP tour to take an extra length of $4 \cdot 5^i\left(\frac{1}{1+\sqrt{2}}\right)^{2i+2}$. Summing that over $i \ge 0$ gives $\frac{4}{(1+\sqrt{2})^2(1-\frac{5}{(1+\sqrt{2})^2})} = 2 \sqrt{2} + 2.$
\fi

So, the total length of the TSP tour is at least $(2\epsilon^{-2} + O(\epsilon^{-1})) \left(2 + \frac{4 + \pi}{2\sqrt{2}-2} + 2 \sqrt{2} + 3\right)$. Hence the ratio of the length of the TSP tour to the length of the sona drawing is, ignoring lower-order terms, $\frac{2 + \frac{4 + \pi}{2\sqrt{2}-2} + 2 \sqrt{2} + 3}{2 + \frac{4 + \pi}{2\sqrt{2}-2}} = \frac{14 + \pi(\sqrt{2}+1) + 8 \sqrt{2}}{8 + 4\sqrt{2} + \pi(\sqrt{2}+1)} \approx 1.54878753$.
\end{proof}}
\medskip

We have presented a construction for the $L_2$ metric in the plane showing that the ratio between the lengths of the minimum-length TSP and the minimum-length sona drawing can be strictly greater than 1.5. 
Our next result shows that this cannot be the case for the $L_1$ and~$L_\infty$ metrics in the plane.
%Apart from the $L_2$ metric in the plane, 
%we have some other results for the $L_1$ and $L_\infty$ metrics. In particular, we show that a ratio $1.5$ is tight for the $L_1$ and $L_\infty$ metrics in the plane. It means that the length of the minimum-length TSP tour is always at most $1.5$ times the length of the minimum-length sona drawing, but sometimes can be very close to $1.5$ times the length of the minimum-length sona drawing.

\begin{theorem}
	\label{thm:L1Linf}
	For the Manhattan ($L_1$) and the Chebyshev ($L_{\infty}$) metrics,
	the minimum-length TSP tour for a set of sites $P$ has length
	at most $1.5$ times that of the minimum-length sona drawing for~$P$. 
	Moreover, this bound is tight for both metrics. 
\end{theorem}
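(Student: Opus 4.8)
The plan is to prove the upper bound for the $L_1$ metric by a direct conversion of a minimum-length sona drawing into a tour, and then to obtain the $L_\infty$ bound for free via the $45^\circ$ rotation that identifies the two metrics. Fix a minimum-length sona drawing $S$, viewed as a connected $4$-regular plane graph $G$ whose bounded faces each contain one site. First I would turn $G$ into a closed walk visiting every site: for each bounded face $f$ with site $s_f$, attach a \emph{spur} from the nearest boundary point of $f$ to $s_f$ and back, of $L_1$-length $2\,\mathrm{dist}_1(s_f,\partial f)$ (the straight segment to the nearest boundary point has $L_1$-length exactly $\mathrm{dist}_1(s_f,\partial f)$ and stays inside the convex $L_1$-ball $B_1(s_f,\mathrm{dist}_1(s_f,\partial f))\subseteq f$). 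Doubling each spur keeps all degrees even, so $G$ together with the spurs is Eulerian; its Eulerian circuit visits every site, and shortcutting it (valid in $L_1$ by the triangle inequality) yields a tour $T$ of no greater length. Hence $|T|_1 \le |S|_1 + \sum_f 2\,\mathrm{dist}_1(s_f,\partial f)$, and it remains to bound the detour sum by $\tfrac12 |S|_1$.

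The key estimate is an $L_1$ isoperimetric inequality: since $L_1$-arclength equals the total variation of $x$ plus that of $y$, any closed curve has $L_1$-length at least $2(w_x+w_y)$, where $w_x,w_y$ are its $x$- and $y$-extents. If $f$ contains the $L_1$-ball (a diamond) of radius $\rho=\mathrm{dist}_1(s_f,\partial f)$ about $s_f$, then $w_x,w_y\ge 2\rho$, so the $L_1$-perimeter of $f$ is at least $8\rho$; equivalently $2\,\mathrm{dist}_1(s_f,\partial f)\le \tfrac14\,\mathrm{perim}_1(f)$. Summing over bounded faces and using that each arc of $S$ borders exactly two faces, so $\sum_{\text{faces}}\mathrm{perim}_1 = 2|S|_1$, gives $\sum_f 2\,\mathrm{dist}_1(s_f,\partial f)\le \tfrac14\cdot 2|S|_1 = \tfrac12|S|_1$. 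Therefore the minimum TSP tour satisfies $\mathrm{TSP}_1 \le |T|_1 \le \tfrac32|S|_1$, which is the claimed $L_1$ bound. For $L_\infty$ I would use that the $45^\circ$ rotation $R$ satisfies $\|Rx\|_1=\sqrt2\,\|x\|_\infty$ (from $|x-y|+|x+y|=2\max(|x|,|y|)$), so it maps any point set to one in which every length scales uniformly by $\sqrt2$; being a rotation it sends sona drawings to sona drawings and tours to tours, so the TSP/sona ratio is preserved and the $L_\infty$ bound follows from the $L_1$ bound.

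For tightness I would build, for each metric, a family approaching the ratio $1.5$, reading off the equality conditions above: equality forces each bounded face to be a diamond (the extremal shape), its site to sit at the center (clearance equal to inradius), and every arc to border two bounded faces (so $\sum_{\text{bounded}}\mathrm{perim}_1 = 2|S|_1$ with the outer face negligible). These conditions are met simultaneously by a large patch of the tiling of the plane by equal diamonds: I would place one isolated site at each diamond center and, as in the construction for Theorem~\ref{thm:ratio}, densely pack sites along the shared tiling edges so that any sona drawing must follow the edges within $\epsilon$ and any tour must pass within $\delta$ of every packed site. Then the minimum sona length is $(1\pm\epsilon)C$, where $C$ is the total edge length, while any tour must additionally detour $2\rho$ into each of the $D$ diamonds to reach its center; because edges are shared, $C\approx 4\rho D$ whereas the detours total $2\rho D\approx \tfrac12 C$, giving ratio $\to 1.5$ as the patch grows and its boundary becomes a vanishing fraction. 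Rotating this example by $45^\circ$ gives the matching $L_\infty$ family. The main obstacle I anticipate is the careful bookkeeping in the tightness construction---realizing the $4$-regular tiling graph as a single closed curve with transverse crossings, and verifying that the packed boundary sites and the central sites indeed lie in distinct faces---so that the clean counting $C\approx 4\rho D$ against detour $2\rho D$ is rigorous; the upper-bound direction, resting on the one-line $L_1$ isoperimetric inequality, should be routine by comparison.
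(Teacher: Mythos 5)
Your proposal is correct and takes essentially the same approach as the paper: the paper likewise augments the minimum sona drawing with doubled segments from each site $p_i$ to its nearest point on the curve, uses that the face containing $p_i$ has boundary length at least $8r_i$ (the $L_1$/$L_\infty$ perimeter of a radius-$r_i$ ball) together with $\sum_i |f_i| \le 2|S(P)|$ to bound the detours by $\tfrac12|S(P)|$, and shortcuts an Euler tour of the resulting multigraph; its tightness construction is exactly your diamond tiling rotated $45^\circ$ (axis-aligned unit squares with sites at their centers over a diamond-shaped patch, with sites packed densely along the square boundaries). The only cosmetic differences are that you derive the $L_\infty$ case from $L_1$ via the rotation identity while the paper treats both metrics simultaneously, and your aside that equality forces diamond-shaped faces is not quite accurate (axis-aligned squares attain equality too, as the paper's own construction shows), but this plays no role in your argument.
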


\begin{proof}
	The proof of the upper bound on the length of the  minimum-length TSP tour follows the lines of the (unpublished) 
	proof of~\cite[Theorem 12]{DDD06}. 
	Let $P = \{p_1,\ldots, p_n\}$ be a set of $n$ sites, 
	$S(P)$ the minimum-length sona drawing for $P$,  
	and $\TSP(P)$ the minimum-length TSP tour for $P$. 
	The sona drawing $S(P)$ must have $n$ bounded faces, each containing a site of $P$. 
	Let~$f_i$ be the face of $S(P)$ containing the site $p_i$. 
	In this proof, for an edge-weighted graph $H$, $|H|$ denotes the sum of the weights/lengths of all the edges of $H$. 
	In particular, $|S(P)|$ denotes the length of the sona drawing~$S(P)$. 
	
	For each site $p_i$, let $c(p_i)$ be the closest point in $S(P)$ to $p_i$ 
	and $r_i$ the distance between $p_i$ and $c(p_i)$. 
	By the definition of $c(p_i)$, the open disk centered at $p_i$ and with radius $r_i$ does not intersect $S(P)$. 
	This implies that the length of the boundary of $f_i$ is at least the perimeter of a disk with radius $r_i$, 
	that for both the $L_1$ and the $L_{\infty}$ metrics is $8r_i$. 
	That is, $|f_i| \le 8r_i$. 
	Moreover, the sum of the lengths of all the faces is $2|S(P)|$, 
	so $|f_1| + \cdots + |f_n| < 2|S(P)|$ since we do not sum the length of the unbounded face. 
	
	We define a multigraph $G$ whose vertex set is the union of
	the set of sites $P$, the set of vertices of $S(P)$, and $\{c(p_i)\in S \mid p_i \in P\}$. 
	The edge set of $G$ is the union of the set of edges of $S$ and 
	two parallel edges $\{p_i,c(p_i)\}$ for each $p_i \in P$. 
	The weight of each edge is its length in the drawing. 
	By the observations above, 
	$|G| = |S(P)| + 2r_1 + \cdots + 2r_n \leq |S(P)| + |f_1|/4 + \cdots + |f_n|/4 < |S(P)| + |S(P)|/2 = 1.5|S(P)|$. 
	
	To obtain the desired upper bound on $|\TSP (P)|$ it remains to show that $|\TSP (P)| \leq |G|$. 
	By construction, since $S(P)$ is Eulerian, so is $G$. 
	An Euler tour of $G$ defines a TSP tour for the vertices of $G$ by skipping vertices that were already visited (as in the Christofides 1.5-approximation algorithm for TSP on instances where the distances form a metric space~\cite{TSP_1.5}). 
	This TSP tour has length at most $|G|$ and can be shortcut so that it only visits the sites of $P$. 
	By the triangle inequality, the length of the tour does not increase with these shortcuts. 
	Thus, we have that $|\TSP (P)| \le |G| < 1.5|S(P)|$.
	
	The construction for the matching this bound is similar to the one in the proof of Theorem~\ref{thm:ratio}, but simpler. 
	An illustration can be found in Figure~\ref{fig:minsona_a}. 
	%Figure~\ref{fig:L1Linf}. 
	For every $\epsilon > 0$ we construct a set of sites  $P_\varepsilon$ (the set of TSP vertices/sona sites) such that $|\TSP (P_\varepsilon)| \ge (1.5 - \varepsilon) |S(P_\varepsilon)|$.
	 %
%	 \begin{figure}
%	 	\centering
%	 	\includegraphics[width=0.8\columnwidth]{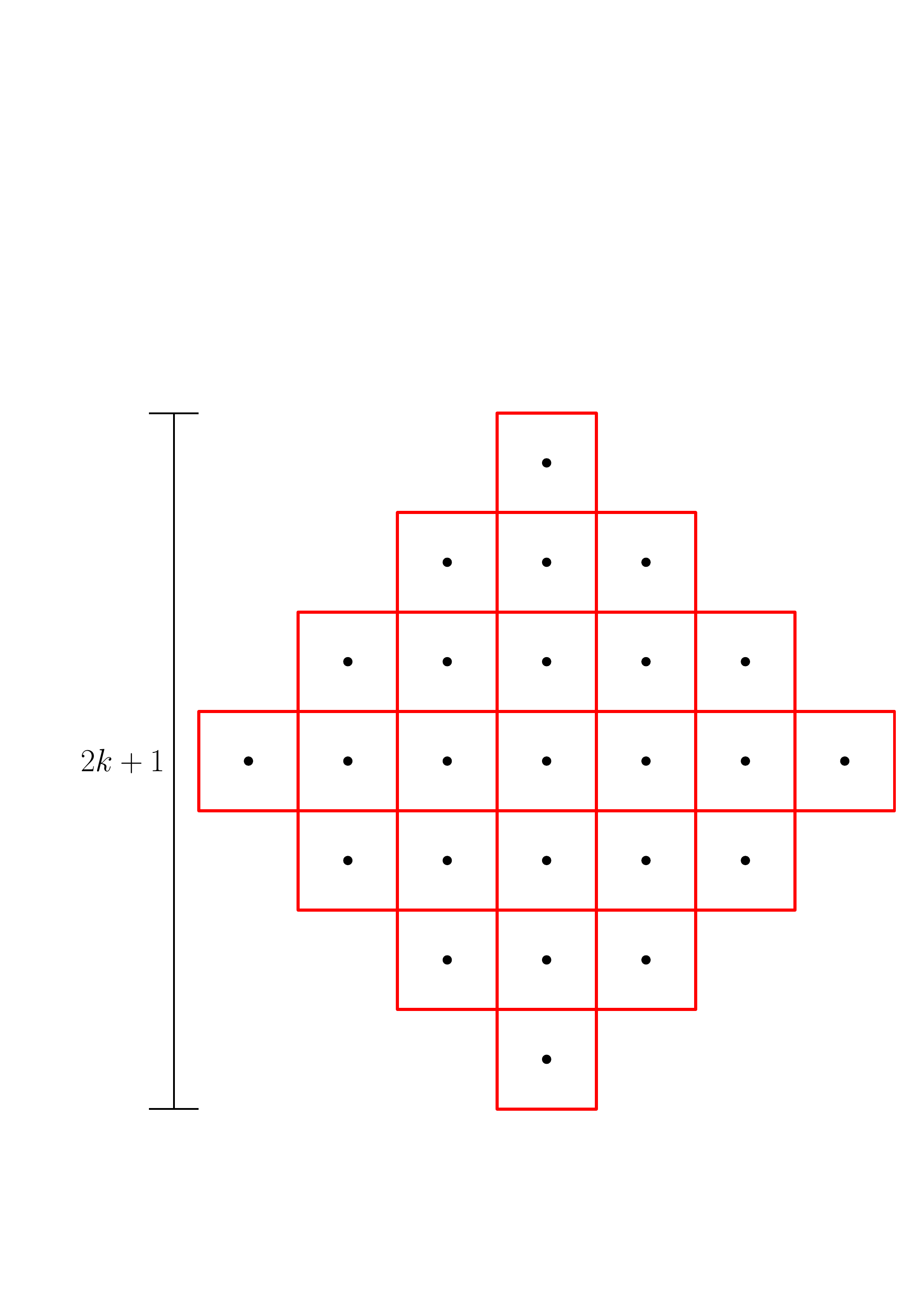}
%	 	\caption{Construction matching the bound for the ratio between the lengths of the minimum-length TSP and the minimum-length sona drawing for a set of sites}
%	 	\label{fig:L1Linf}
%	 \end{figure}
	 %
	 
	We fix $k=\lceil 1/(2\varepsilon) \rceil$. 
	The set of sites $P_\varepsilon$ includes every integer lattice point $(x,y)$ 
	such that $|x|+|y| \leq k$. 
	Consider drawing $Q$ resulting from the union of the axis-aligned unit squares centered at these sites. 
	It is easy to see, for example by rotating the construction, 
	that so far we have added $(k+1)^2 + k^2$ sites to $P_\varepsilon$ 
	and that the length of $Q$ is $4(k+1)^2$. 
	Straightforward computations show that $\frac{4(k+1)^2 + (k+1)^2 + k^2}{4(k+1)^2} = 5/4 + \frac{k^2}{4(k+1)^2} \ge  5/4 + \frac{1}{4(2\varepsilon + 1)^2} = 1.5 - \varepsilon + \frac{\varepsilon^2 (4\varepsilon + 3)}{(2\varepsilon + 1)^2}  > 1.5 - \varepsilon$. 
	Thus, a dense-enough packing of sites along $Q$ yields the desired result. 	
\end{proof}

We next consider sona drawings on the sphere. 
By the definition of sona drawings in the plane, 
the unbounded face contains no sites. 
For the sphere we consider the following analogue: 
if there is a face that contains in its interior a half-sphere 
then this face contains no sites. 
Note that there is at most one such face. 
The following theorem shows a tight upper bound on the TSP/sona separation factor for drawings on the sphere. 
(We consider the usual metric inherited from the Euclidean metric in~$\mathbb{R}^3$.)

\begin{theorem}
	\label{thm:sphere}
	For drawings on the sphere, the length of the 
	minimum-length TSP tour for a set of sites $P$ 
	is at most $2$ times the length of the minimum-length sona drawing for $P$. 
	Moreover, this bound is tight. 
\end{theorem}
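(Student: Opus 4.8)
The plan is to follow the same strategy as in the proof of \autoref{thm:L1Linf}, adapting the two metric-dependent ingredients: the isoperimetric bound relating a face to the distance from its site, and the tightness construction. For the upper bound, given a minimum-length sona drawing $S(P)$ on the unit sphere, I let $c(p_i)$ be the closest point of $S(P)$ to the site $p_i$ and $r_i$ the geodesic distance between them, so that the open geodesic disk of radius $r_i$ around $p_i$ lies inside the face $f_i$, whence $|f_i|\ge 2\pi\sin r_i$. The key spherical computation is that a geodesic disk of radius $r$ has boundary length $2\pi \sin r$, and that $2\pi \sin r \ge 4r$ for all $0 \le r \le \tfrac{\pi}{2}$, with equality exactly at $r=\tfrac{\pi}{2}$ (since $\sin r / r$ decreases to $2/\pi$ on this range). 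The hemisphere convention is what confines us to this range: a face containing a geodesic disk of radius $>\tfrac\pi2$ would contain a half-sphere in its interior, hence be the unique site-free face, so every site-face has $r_i \le \tfrac{\pi}{2}$ and therefore $|f_i| \ge 2\pi \sin r_i \ge 4r_i$. Summing and using $\sum_i |f_i| \le 2|S(P)|$ gives $\sum_i 2r_i \le |S(P)|$. Building the Eulerian multigraph $G$ exactly as before (the drawing plus two parallel edges $\{p_i,c(p_i)\}$ per site) yields $|G| = |S(P)| + \sum_i 2r_i \le 2|S(P)|$, and the Christofides-style shortcut of an Euler tour of $G$ gives $|\TSP(P)| \le |G| \le 2|S(P)|$.

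The equality analysis of this chain tells me what the tight construction must look like: the bound is attained only as $r_i \to \tfrac\pi2$, i.e.\ when the relevant faces are essentially hemispheres. Since disjoint radius-$\tfrac\pi2$ disks are hemispheres, at most two of them fit, so I cannot make every site extremal; instead I mimic the planar construction of \autoref{thm:ratio}, using a densely packed backbone together with a few far sites whose detours exploit the curvature of the sphere.

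Concretely, I take the backbone to be a great circle (the equator), along which I pack sites so densely that any sona drawing must follow it within $\epsilon$ and any tour must traverse essentially its whole length $2\pi$; these equatorial sites sit in tiny faces pinched off the backbone and add only $O(\epsilon)$ to the drawing. I then add two antipodal sites at the poles. The two hemispheres cut off by the equator are their faces, so the polar sites cost \emph{no} additional sona length, giving a sona drawing of length $2\pi + O(\epsilon)$. A tour, however, must detour a geodesic distance $\tfrac\pi2$ up to each pole and back, adding $\approx 2\cdot\tfrac\pi2$ per pole on top of the $\approx 2\pi$ equatorial traversal, for a total of $\approx 4\pi$. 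Hence the TSP/sona ratio tends to $4\pi/2\pi = 2$.

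The main obstacle is the TSP lower bound, i.e.\ proving that every tour of this instance has length at least $(4-O(\epsilon))\pi$. As in \autoref{thm:ratio} I would split it into two additive contributions: the dense equatorial sites force the tour to wind once around the equator (length $\ge 2\pi - O(\epsilon)$, using that consecutive equatorial sites lie in separate faces), while each pole contributes an in-and-out detour of length $\ge 2\cdot\tfrac\pi2$ because the open geodesic disk of radius $\tfrac\pi2$ around a pole contains no other site. The delicate point is to verify that these two contributions do not overlap—that a tour cannot amortize its polar detours against its equatorial winding—which I would argue by charging the two contributions to disjoint parts of the sphere (the two open polar hemispheres versus a neighborhood of the equator). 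Given this, the exhibited drawing of length $2\pi+O(\epsilon)$ together with $|\TSP|\ge (4-O(\epsilon))\pi$ makes the ratio exceed $2-O(\epsilon)$, matching the upper bound.
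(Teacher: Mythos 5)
Your proposal matches the paper's proof essentially exactly: the same cap-perimeter bound $2\pi\sin r_i \ge 4r_i$ for $0\le r_i\le \pi/2$ (enforced by the hemisphere convention, via monotonicity of $\sin x/x$), the same Eulerian multigraph with doubled edges $\{p_i,c(p_i)\}$ and shortcutting, and the same tight example of two polar sites plus a densely packed equator, which the paper in fact justifies in only one sentence. One small repair to your extra detail on the TSP lower bound: the two open polar hemispheres are \emph{not} disjoint from any neighborhood of the equator, so charge instead to polar caps of radius $\pi/2 - h$ and an equatorial band of half-width $h$, letting $h\to 0$.
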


\begin{proof}
	The proof of the upper bound on the length of the  minimum-length TSP tour again follows the lines of the (unpublished) 
	proof of~\cite[Theorem 12]{DDD06}. 
	It only differs slightly % in one point 
	from the first part of the the proof of Theorem~\ref{thm:L1Linf}. 
	Using the same notation, in this case, 
	the distance $r_i$ between a site $p_i \in P$ 
	and its closest point $c(p_i)$ in $S(P)$ 
	corresponds to the length of the shortest arc on the great circle through $p_i$ and $c(p_i)$. 
	The open disk centered at $p_i$ and with radius $r_i$ is an open spherical cap that does not intersect $S(P)$. 
	Assuming that the sphere has radius $\rho$, 
	the boundary of this cap has length $2\pi \rho \sin (r_i / \rho)$. 
	Since the face containing a site cannot contain a half-sphere in its interior we have that 
	$0 \le r_i / \rho \le \pi/2$. 
	The function $\sin(x) / x$ in the interval $0\le x \le \pi/2$ is decreasing. 
	Thus, $\rho / r_i \sin (r_i / \rho) \geq 2/ \pi \sin (\pi/ 2) = 2/ \pi$. 
	This implies that $ 2\pi \rho \sin (r_i / \rho) \geq 4r_i$. 
	Thus, the face $f_i$ of $S(P)$ containing the site $p_i$ 
	has length $|f_i| \ge 4r_i$. 
	Moreover, $|f_1| + \cdots + |f_n| \le 2|S(P)|$. 
	With the same arguments and defining the same multigraph as in the proof of Theorem~\ref{thm:L1Linf} we obtain that 
	$|\TSP (P)| \le 2|S(P)|$.
	
	The construction showing that this bound is tight places two sites on the north and south poles of the sphere and packs the equator densely with sites. 
	Then the minimum-length sona drawing goes along the equator while the minimum-length TSP must reach both poles, yielding a $2-\varepsilon$ TSP/sona separation factor. 
\end{proof}

\section{Complexity of Length Minimization}
\label{sec:min-length}
\later{\section{Complexity of Length Minimization} \label{app:min-length}}

In this section and Appendix~\ref{app:min-length},
we prove that finding a sona drawing of minimum length for given sites is NP-hard, even when the sites lie on a polynomially sized grid. The complexity of minimum-length sona drawing was posed as an open problem in 2006 by Damian et al.~\cite[Open Problem 4]{DDD06}. We use a reduction from the problem of finding a Hamiltonian cycle in a grid graph (a graph whose $n$ vertices are a subset of the points in an integer grid, and whose edges are the unit-length line segments between pairs of vertices), proven NP-complete by Itai, Papadimitriou, and Szwarcfiter~\cite{IPS82}.

Let $V$ be the set of $n$ vertices in a hard instance for Hamiltonian cycle in grid graphs. If $V$ is a \textsc{yes} instance, its Hamiltonian cycle forms a Euclidean traveling salesman tour with length exactly $n$. If it is a \textsc{no} instance, the shortest Euclidean traveling salesman tour through its vertices has length at least $1$ for every grid edge, and length at least $\sqrt{2}$ for at least one edge that is not a grid edge (as this is the shortest distance between grid points that are non-adjacent), so its total length is at least $n+\sqrt{2}-1\approx n+0.414$. For the $L_1$ distance, the increase in length is larger, at least $1$.
Our reduction replaces each point of $V$ by two points, close enough together to make the increase in length from converting a TSP to a sona drawing negligible with respect to this gap in tour length.

\both{
\begin{theorem}
It is NP-hard to find a sona drawing for a given set of sites whose length is less than a given threshold $L$, for any of the $L_1$, $L_2$, and $L_\infty$ metrics.
\end{theorem}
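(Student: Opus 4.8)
The plan is to reduce from Hamiltonian cycle in grid graphs, exactly as set up in the paragraph preceding the statement. Given a grid graph on vertex set $V$ with $|V| = n$, I would replace each vertex $v \in V$ by a pair of sites $v^+, v^-$ placed symmetrically about $v$ at mutual distance $d$, where $d$ is a small parameter to be fixed. The crucial forcing property is immediate from the definition of a sona drawing: since each bounded face contains exactly one site and the outer face contains none, $v^+$ and $v^-$ must lie in distinct bounded faces, so any sona curve must cross the tiny segment $[v^-,v^+]$ and hence pass within distance $d/2$ of the location $v$. To keep everything on a polynomially sized grid I would take $d = 1/n^2$ (so that rescaling by $n^2$ makes all coordinates integers), and I would set the length threshold $L$ strictly inside the gap between the yes- and no-case tour lengths established below, e.g.\ $L = n + (\sqrt 2 - 1)/2$.

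For the yes-case upper bound, suppose $V$ has a Hamiltonian cycle; this is a Euclidean tour of length exactly $n$ through the vertex locations. I would convert it into a sona drawing by the standard TSP-to-sona transformation described in the introduction, adapted so that at each vertex the tour makes two small excursions (a figure-eight / double loop) enclosing $v^+$ and $v^-$ in two separate faces; at a single distinguished vertex only one excursion is made and the other site is placed slightly interior to the tour, so that the global bounded-face count is exactly $2n$. Each local gadget has size $O(d)$ and adds only $O(d)$ to the length, so the resulting sona drawing has length at most $n + c_1 n d$, which is below $L$ for $d = 1/n^2$ and $n$ large. Hence a yes-instance admits a sona drawing of length $< L$.

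The heart of the argument, and the step I expect to be the main obstacle, is the no-case lower bound: I must show that any sona drawing of the doubled point set is essentially as long as the Euclidean TSP tour of $V$. By the forcing property, a sona curve $C$ comes within $d/2$ of every vertex $v_i$; letting $q_i \in C$ be a nearest point and reading off the cyclic order in which $C$ meets $q_1,\dots,q_n$, the straight-line closed polygon through the $q_i$ in that order has length at most $\mathrm{len}(C)$ and at least $\TSP(\{q_i\})$, while moving each point by at most $d/2$ changes the TSP value by at most $dn$. Thus $\mathrm{len}(C) \ge \TSP(V) - dn$, and in the no-case $\TSP(V) \ge n + \sqrt 2 - 1$ (as recalled before the statement), so every sona drawing has length at least $n + (\sqrt 2 - 1) - dn > L$. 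The delicate part is the bookkeeping of these additive error terms: one must verify that the $O(nd)$ slack from both the gadget construction and the near-visit estimate is genuinely smaller than the fixed gap $\sqrt 2 - 1$, which is why $d$ must be taken $o(1/n)$ rather than merely small. For the $L_1$ and $L_\infty$ metrics the same reduction works verbatim, and in fact more easily, since the no-case tour length exceeds $n$ by at least $1$ rather than $\sqrt 2 - 1$, widening the gap.
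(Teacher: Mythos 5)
Your reduction for the $L_1$ and $L_2$ metrics is essentially the paper's: the same vertex-doubling of a hard grid-graph Hamiltonicity instance, the same yes-case conversion of a Hamiltonian cycle into a sona drawing via small double loops (with one distinguished pair using the interior face of the tour itself, correctly keeping the bounded-face count at $2n$), and a no-case lower bound that differs only in bookkeeping: you shortcut the sona curve through nearest points $q_i$ in their cyclic order along the curve and perturb each by at most $d/2$, while the paper locally modifies the curve at cost $2\varepsilon$ per pair so that it touches each original vertex. Both routes give $\mathrm{len}(C)\ge n+\sqrt{2}-1-O(nd)$, and your error analysis with $d=1/n^2$ is sound (the paper's $\varepsilon=\Theta(1/n)$ with $4n\varepsilon<\sqrt{2}-1$ already suffices, so your choice is more conservative than necessary but harmless).

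The genuine gap is your last sentence: the claim that the reduction works ``verbatim, and in fact more easily'' for $L_\infty$ is false. In the $L_\infty$ metric, diagonally adjacent grid points such as $(0,0)$ and $(1,1)$ are at distance exactly $1$ --- the same as grid-adjacent points --- so the premise that any tour edge between non-adjacent grid vertices costs strictly more than $1$ fails, and the yes/no gap your soundness argument rests on evaporates: a no-instance of grid Hamiltonicity may admit an $L_\infty$ tour of length exactly $n$ using diagonal steps. The $L_1$ case is fine (non-adjacent grid points are at $L_1$ distance at least $2$, giving a gap of at least $1$), and the paper derives the $L_\infty$ case from it rather than directly: it uses a hard $L_1$ instance rotated by $45^\circ$. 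Concretely, the map $(x,y)\mapsto(x-y,\,x+y)$ carries $L_1$ distances exactly to $L_\infty$ distances, since $\max(|a-b|,|a+b|)=|a|+|b|$, and it keeps coordinates integral, so the $L_1$ hardness transfers. Your proof needs this rotation step (or some other argument specific to $L_\infty$); everything else stands.
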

}

\later{
\begin{proof}
Let $V$ be the set of $n$ vertices in a hard instance for finding a Hamiltonian cycle in grid graphs. 
We may form a hard instance of the minimum-length sona drawing problem for $L_1$ or $L_2$ distances  by replacing each vertex in $V$ by a pair of sites, one at the original vertex position and the other at distance less than $\varepsilon$ from it, where $\varepsilon=\Theta(1/n)$ is chosen to be small enough that $4n\varepsilon<\sqrt{2}-1$. We set $L=n+2n\varepsilon$. For $L_\infty$ distance, we use a hard instance for $L_1$ distance, rotated by $45^\circ$.

\begin{figure}[t]
\includegraphics[width=\columnwidth]{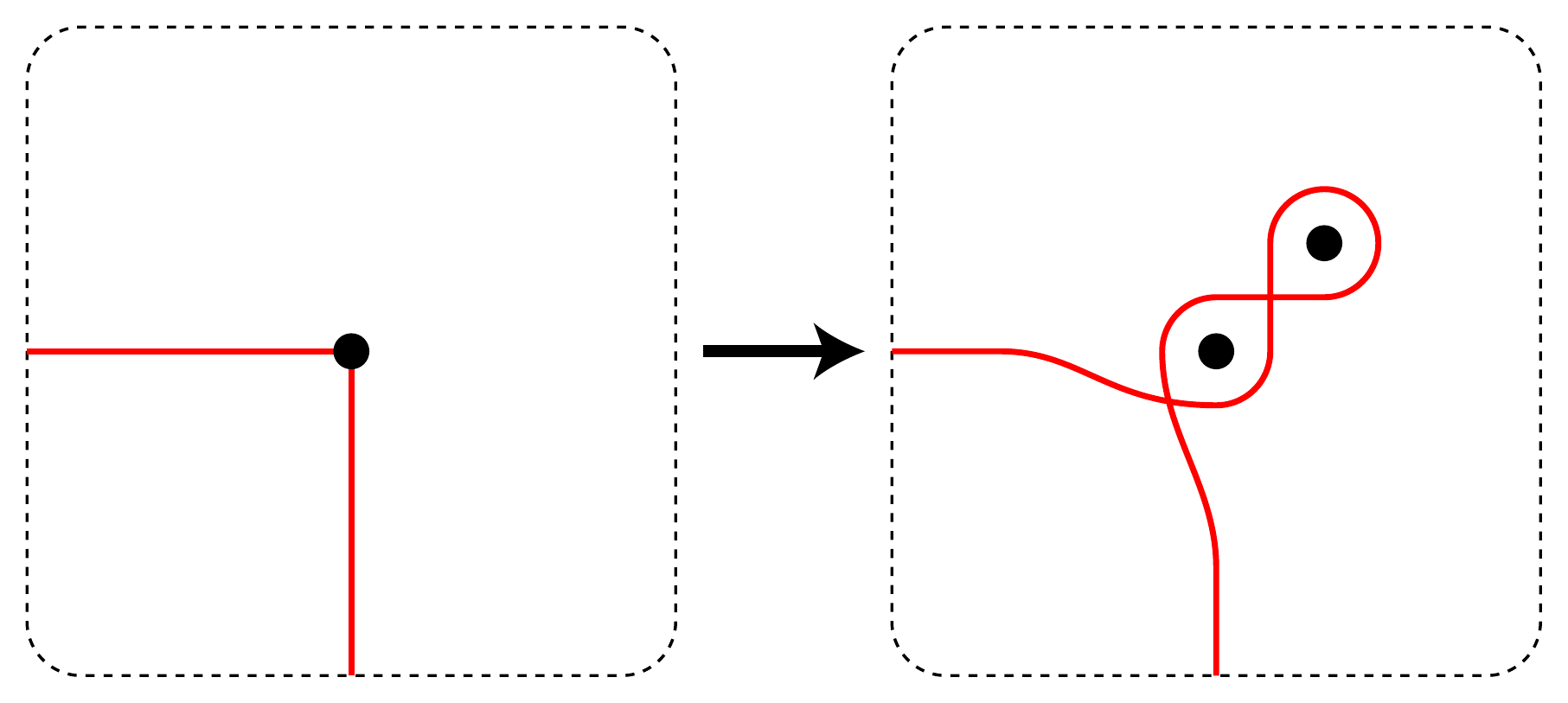}
\caption{Local modifications to convert a grid Hamiltonian cycle into a short sona drawing for a set of doubled sites}
\label{fig:ham2sona}
\end{figure}

If $V$ is a yes-instance for Hamiltonian cycle, let $C$ be a Hamiltonian cycle of length $n$ for $V$. We may form a sona drawing of length less than $L$ by modifying $C$ within a neighborhood of each pair of sites so that, for all but one of these pairs, it makes two loops, one surrounding each site (\autoref{fig:ham2sona}), and so that for the remaining pair it makes one loop around one of the two sites and surrounds the other point by the face formed by $C$ itself. In this way, each face of the modified curve surrounds a single site of our instance. Each of these local modifications to $C$ may be performed using additional length less than $2\varepsilon$, so the total length of the resulting sona drawing is less than $L$.

If $V$ is a \textsc{no} instance for Hamiltonian cycle, let $C$ be any sona drawing for the resulting instance of the minimum-length sona drawing problem. Then $C$ must pass between each pair of sites in the instance, and by making a local modification of length at most $2\varepsilon$ near each pair, we can cause it to touch the point in the pair that belongs to $V$ itself. Thus, we have a curve of length $|C|+2n\varepsilon$ touching all points of $V$. Because $V$ is a \textsc{no} instance, the length of this curve must be at least $n+\sqrt{2}-1$, from which it follows that the length of $C$ is at least $n+\sqrt{2}-1-2n\varepsilon\ge L$.
\end{proof}

By scaling the sites by a factor of $\Theta(1/\varepsilon)=O(n)$ we may obtain a hard instance of the minimum-length sona drawing problem in which all sites lie in an integer grid whose bounding box has side length $O(n^2)$.

It is possible to represent a minimum-length sona drawing combinatorially, as a \emph{conveyor belt}~\cite{BBD19} formed by bitangents and arcs of infinitesimally small disks centered at each site, and to verify in polynomial time that a representation of this form is a valid sona drawing. However, this does not suffice to prove that the decision version of the minimum-length sona drawing problem belongs to NP. The reason is that, when the sites have integer coordinates,
the limiting length of a sona drawing, represented combinatorially in this way, is a sum of square roots (distances between pairs of given points) and we do not know the computational complexity of testing inequalities involving sums of square roots~\cite{OR81,TOPP33}.
(Euclidean TSP has the same issue.)
}%\later

\section{Complexity of Grid Drawing Existence}
\label{sec:grid}

While minimizing the length of sona drawings in general is hard, if we restrict the drawing to lie on a grid, then even determining the existence of a sona drawing is hard. 

Given $n$ sites at the centers of some cells in the unit-square grid,
a \emph{grid sona drawing} is a sona drawing whose edges are drawn as
polygonal lines along the orthogonal grid lines (like orthogonal graph drawing).

We show that finding a grid sona drawing for a given set of sites is NP-hard by a reduction from Planar CNF SAT \cite{Lichtenstein-1982}.

\subsection{Construction}

%Here we describe the properties of the gadgets we will use for the reduction. 
In this section we view the grid as a graph, thus by \emph{edge} we mean a unit segment of a grid line, and by \emph{vertex} we mean a grid vertex --- these terms are distinct from ``sona edge'' etc. We say that an edge is either \emph{on} or \emph{off} according to as it belongs in the sona drawing. The subgraph of the grid that is on is the \emph{path graph}. Observe that two grid-adjacent sites always require the edge between them to be on; otherwise both would be in the same sona face (\emph{connected}). Also, every vertex must have even degree in the path graph.

Here we are concerned with internal properties of the gadgets. Their exteriors are lined with unconnected edges; we will show later how to connect them.

\begin{figure}[t]
\centering
\begin{minipage}[b]{.33\linewidth}
\centering
\includegraphics[scale=0.14]{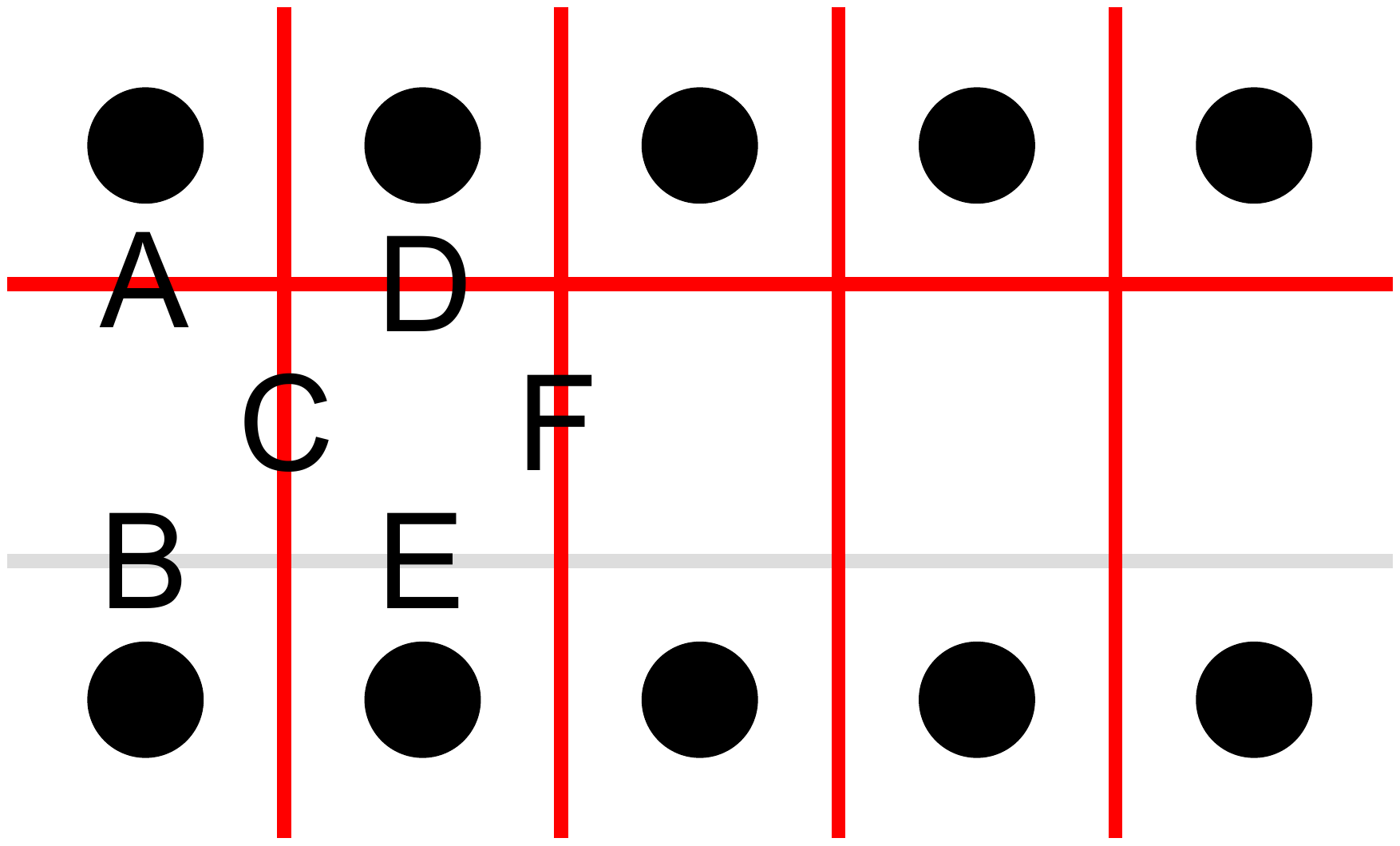}
\caption{\protect\raggedright Wire gadget\newline}
\label{fig:wire}
\end{minipage}\hfill
\begin{minipage}[b]{.33\linewidth}
\centering
\includegraphics[scale=0.14]{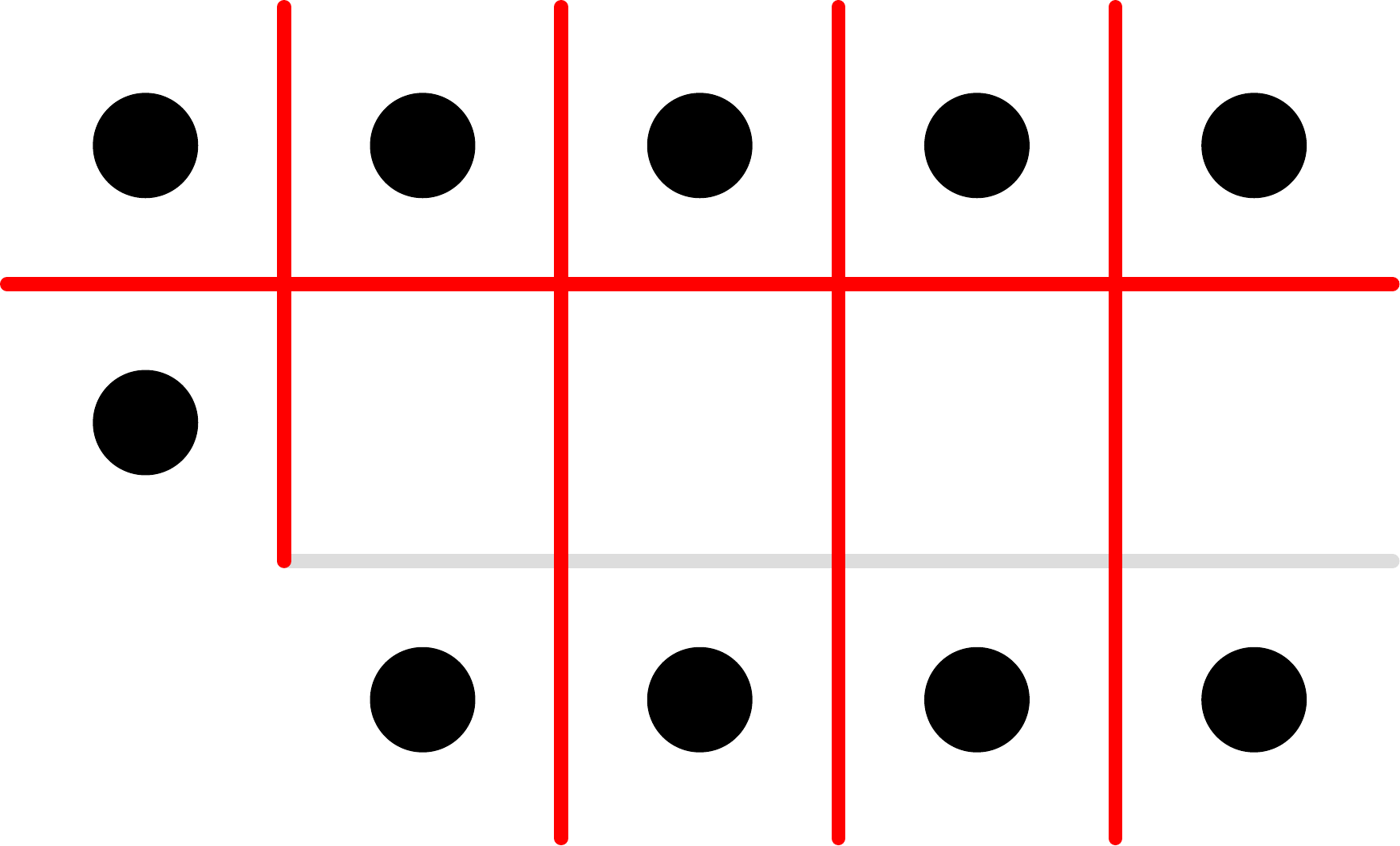}
\caption{\protect\raggedright Constant gadget\newline}
\label{fig:constant}
\end{minipage}\hfill
\begin{minipage}[b]{.3\linewidth}
\centering
\includegraphics[scale=0.11]{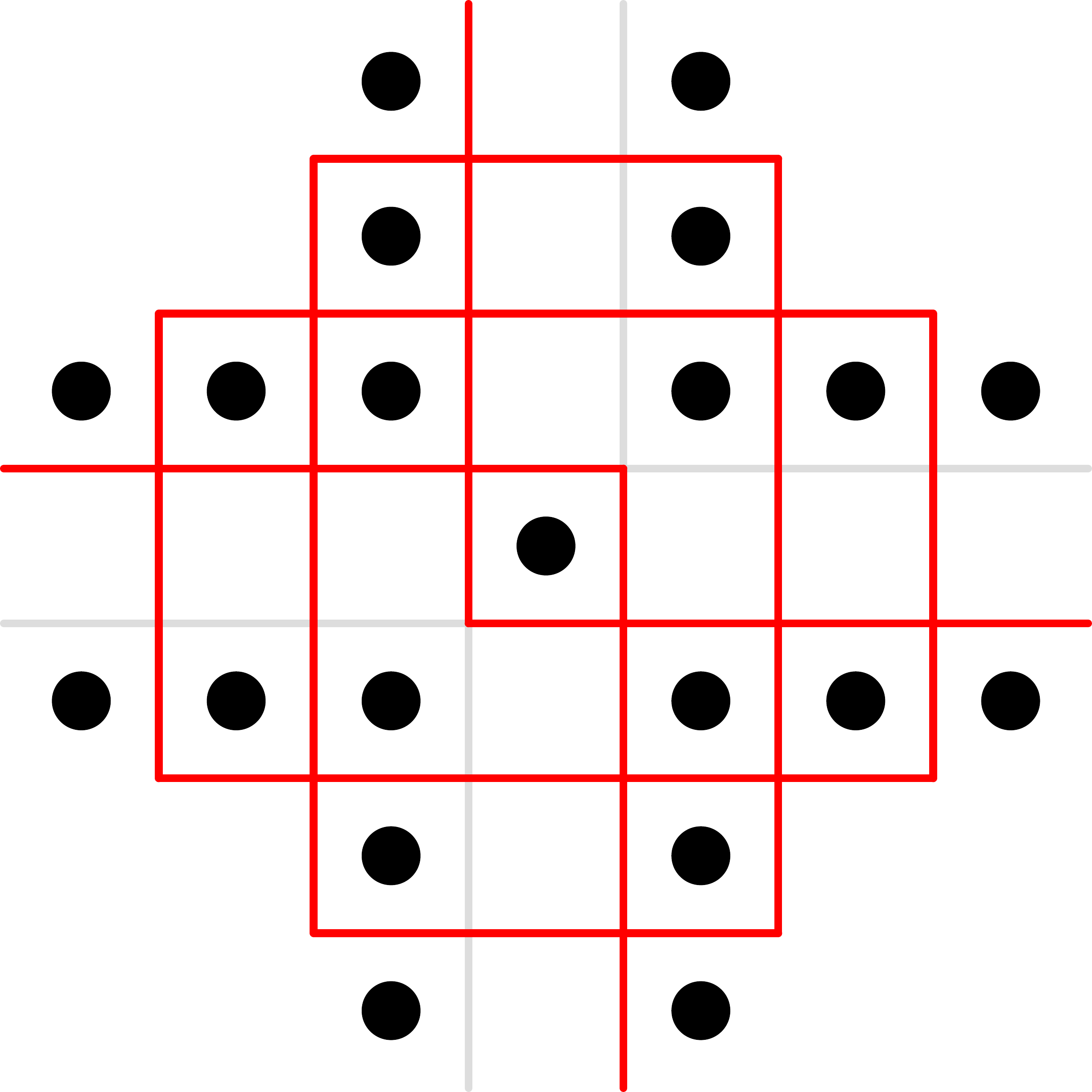}
\caption{\protect\raggedright Turn / Split / Invert gadget}
\label{fig:invert}
\end{minipage}
\end{figure}

%\begin{figure}[t]
%\centering
%\includegraphics[width=.4\columnwidth]{wire}
%\caption{Wire gadget}
%\label{fig:wire}
%\end{figure}

\paragraph*{Wire.}
The \emph{wire} gadget is shown in \autoref{fig:wire}. One of edges $A$ and $B$ must be on, otherwise two sites would be connected. Assume without loss of generality that $A$ is on. Now, suppose $C$ is off. Then $D$ must be too, to preserve even vertex degree. Then, $B$ and $E$ must both be on to prevent sites from being connected, but this is impossible with $C$ off. Therefore $C$ and $D$ are on. The same reasoning shows that the entire line $A$, $D$, etc. is on, as well as edges $C$, $F$, etc. Then $E$ must be off to prevent an empty face, and thus the entire line $B$, $E$, etc. is off. The wire thus has two states: the upper line can be on and the lower off, or vice-versa. We can extend a wire as long as necessary. An unconnected wire end serves as a variable.

In Figures~\ref{fig:wire} through~\ref{fig:or}, all marked edge states (red for on, gray for off) are forced by the indicated wire states. These marks were generated by computer search, but are easy to verify by local analysis.

%This is the type of local reasoning used to show all of our basic gadget properties. In the interest of space we will not be so explicit for most of the gadgets; simple inspection will confirm the claimed properties [ or do we want to put explicit proofs in the appendix? ]. All gadget properties were also confirmed with computer search of valid states.

%\begin{figure}[t]
%\centering
%\includegraphics[width=.4\columnwidth]{constant}
%\caption{Constant gadget}
%\label{fig:constant}
%\end{figure}

\paragraph*{Constant.}
The gadget shown in \autoref{fig:constant} forces the attached wire to be in the up state: the edge between the two left sites must be on, forcing the rest.

%\begin{figure}[t]
%\centering
%\includegraphics[width=.5\columnwidth]{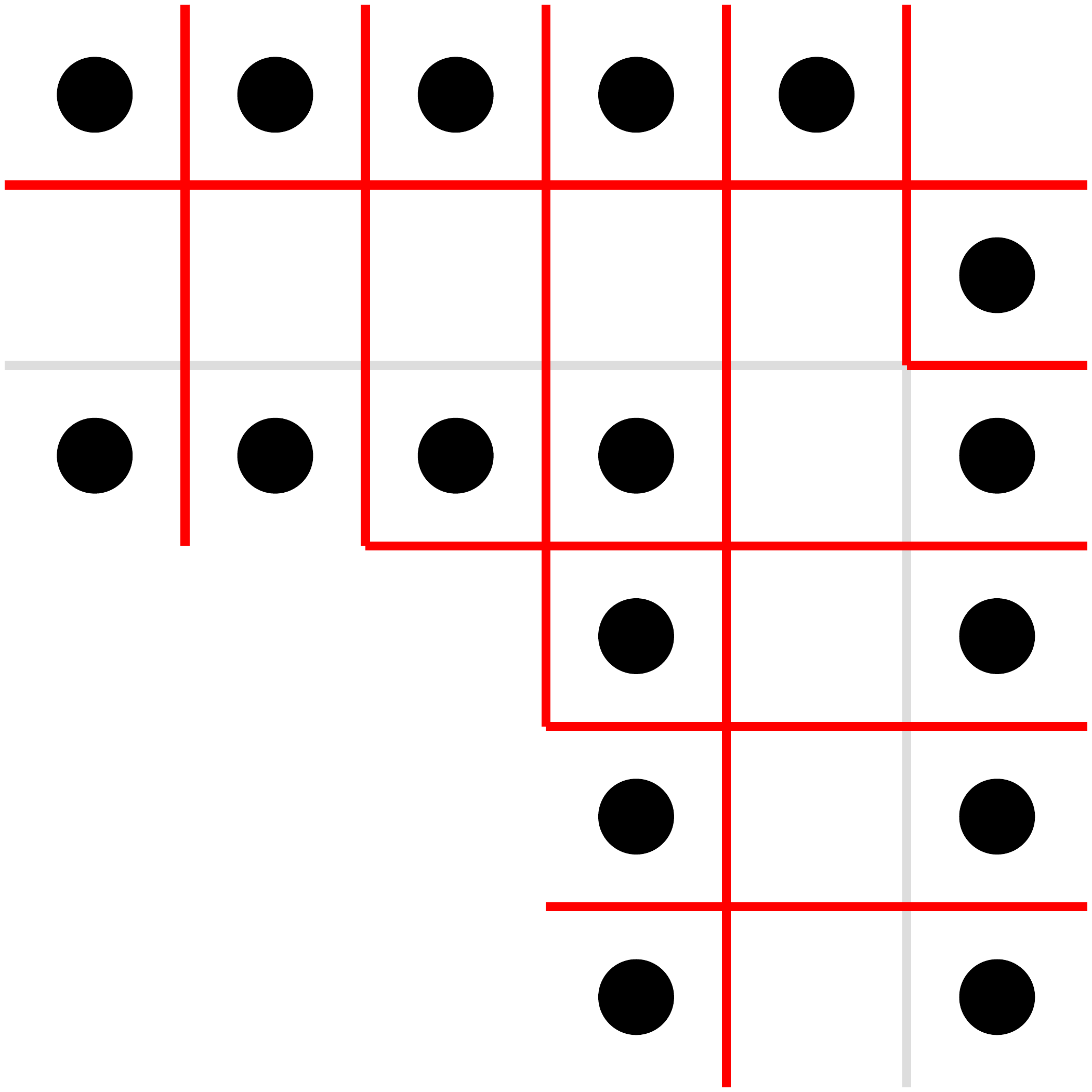}
%\caption{Turn Gadget}
%\label{fig:turn}
%\end{figure}

%\paragraph*{Turn.}
%The \emph{turn} gadget is shown in \autoref{fig:wire}. It is easily seen that each wire's state forces the other. For example, if the horizontal wire were in the lower state, a degree-3 vertex would be created at the right edge, forcing the vertical wire to be in the right state to rectify it.
%
%We use wires to represent truth values. When we turn, note that the representation reverses: if the signal comes in from the bottom, and the leftward state (relative to the direction of input) is ``true'', then as the signal exits to the left the rightward state is now ``true''.

%\begin{figure}[t]
%\centering
%\includegraphics[width=.5\columnwidth]{invert}
%\caption{Turn / Split / Invert gadget}
%\label{fig:invert}
%\end{figure}

\paragraph*{Turn / Split / Invert.}
The gadget shown in \autoref{fig:invert} is multi-purpose. If we view the left wire as the input, then the upper and lower outputs represent turned signals, and the right output represents an inverted signal. (Unused outputs can be left unattached, thus unconstrained.)

Any wire state forces all the others. Given that the left wire is in the up state, suppose the right wire is also up. Then the top wire and bottom wire must be in the same left/right state, otherwise we will have degree-three vertices in the middle. But this would leave the central site connected to another site, so the right wire is forced down. Then, if the (top, bottom) wires are not in the (left, right) states, again the central site will be connected to another one. (This figure contains multiple loops, but these will be eliminated in the final configuration by adding more edges.)

\begin{figure*}[t]
\centering
\subcaptionbox{\label{fig:or-1} false + true $\rightarrow$ true}
  {\includegraphics[scale=0.105]{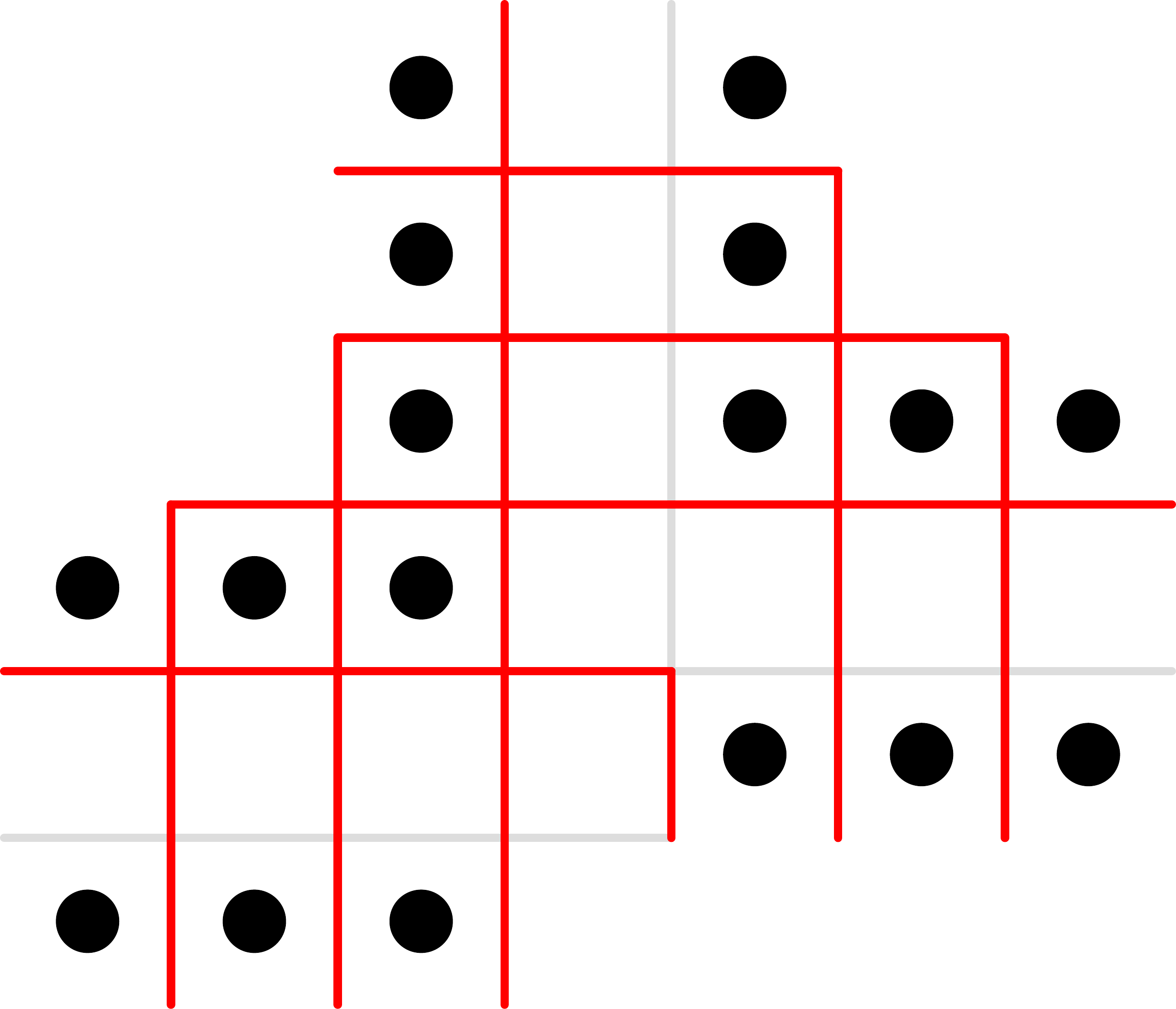}}\hfill
\subcaptionbox{\label{fig:or-2} true + true $\rightarrow$ true}
  {\includegraphics[scale=0.105]{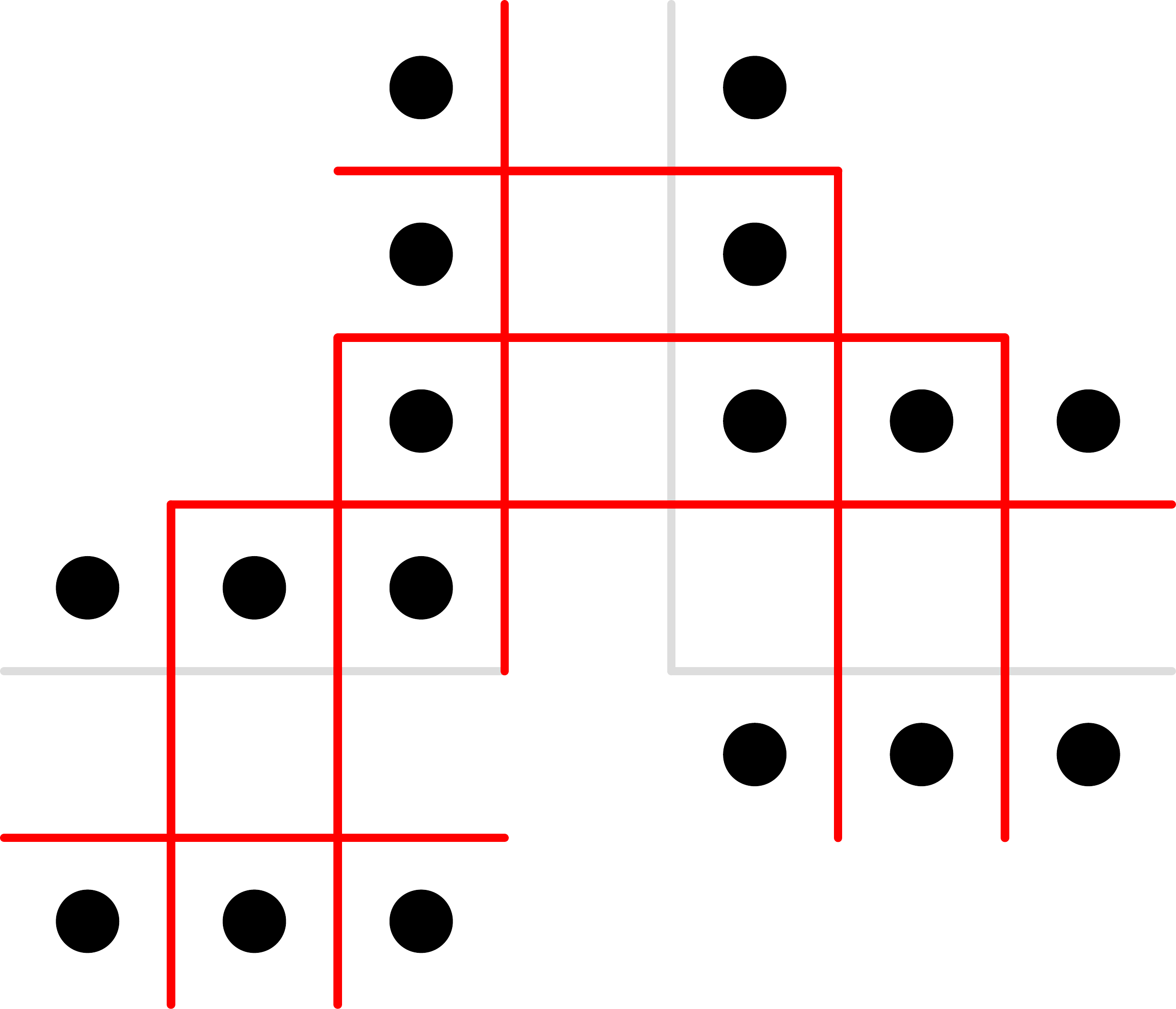}}\hfill
\subcaptionbox{\label{fig:or-3} true + false $\rightarrow$ true}
  {\includegraphics[scale=0.105]{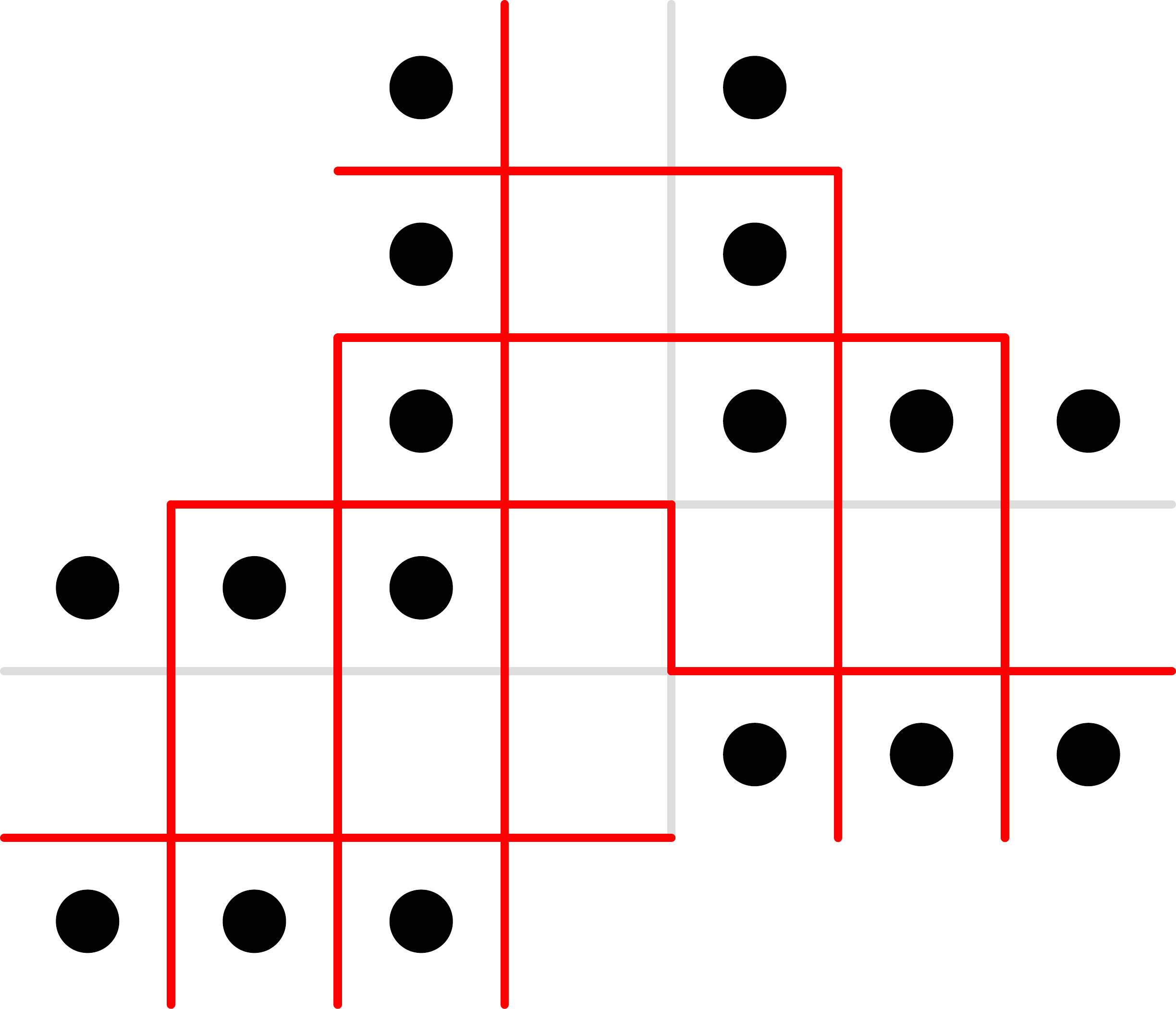}}\hfill
\subcaptionbox{\label{fig:or-4} false + false $\rightarrow$ false}
  {\includegraphics[scale=0.105]{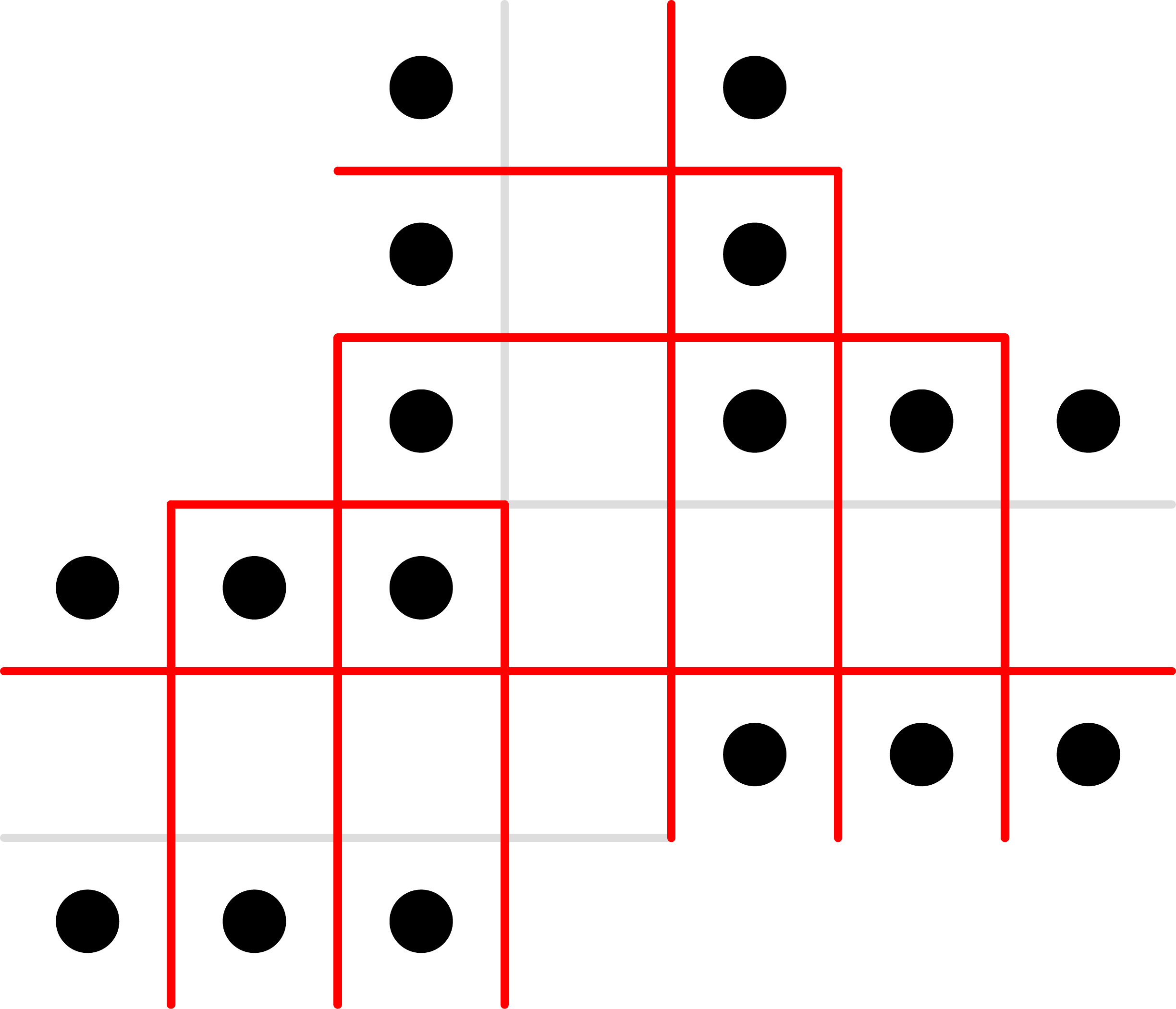}}\hfill
\subcaptionbox{\label{fig:or-bad} Bad state}
  {\includegraphics[scale=0.105]{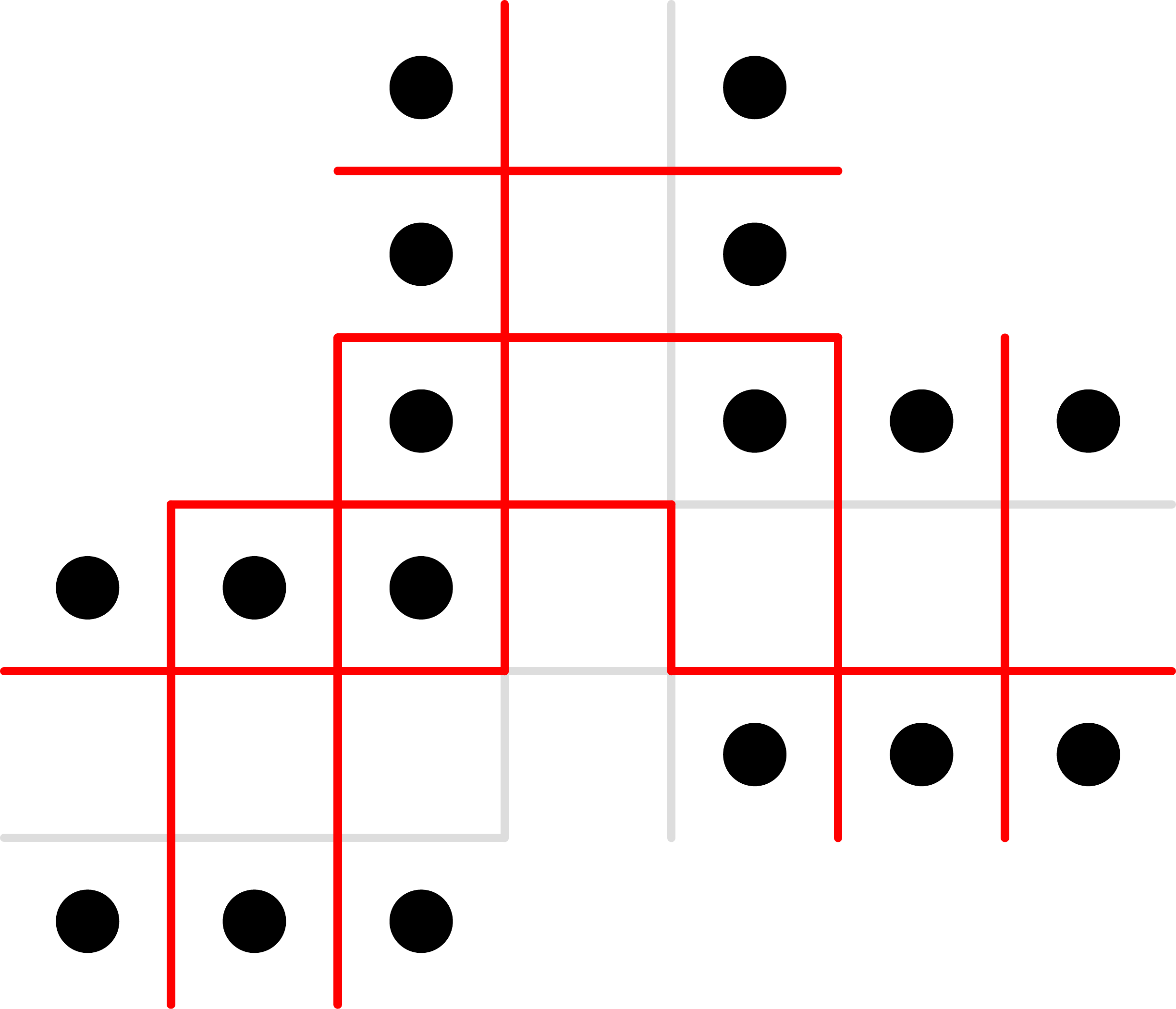}}
\caption{OR gadget}
\label{fig:or}
\end{figure*}

%\begin{figure}[t]
%\centering
%\includegraphics[width=.4\columnwidth]{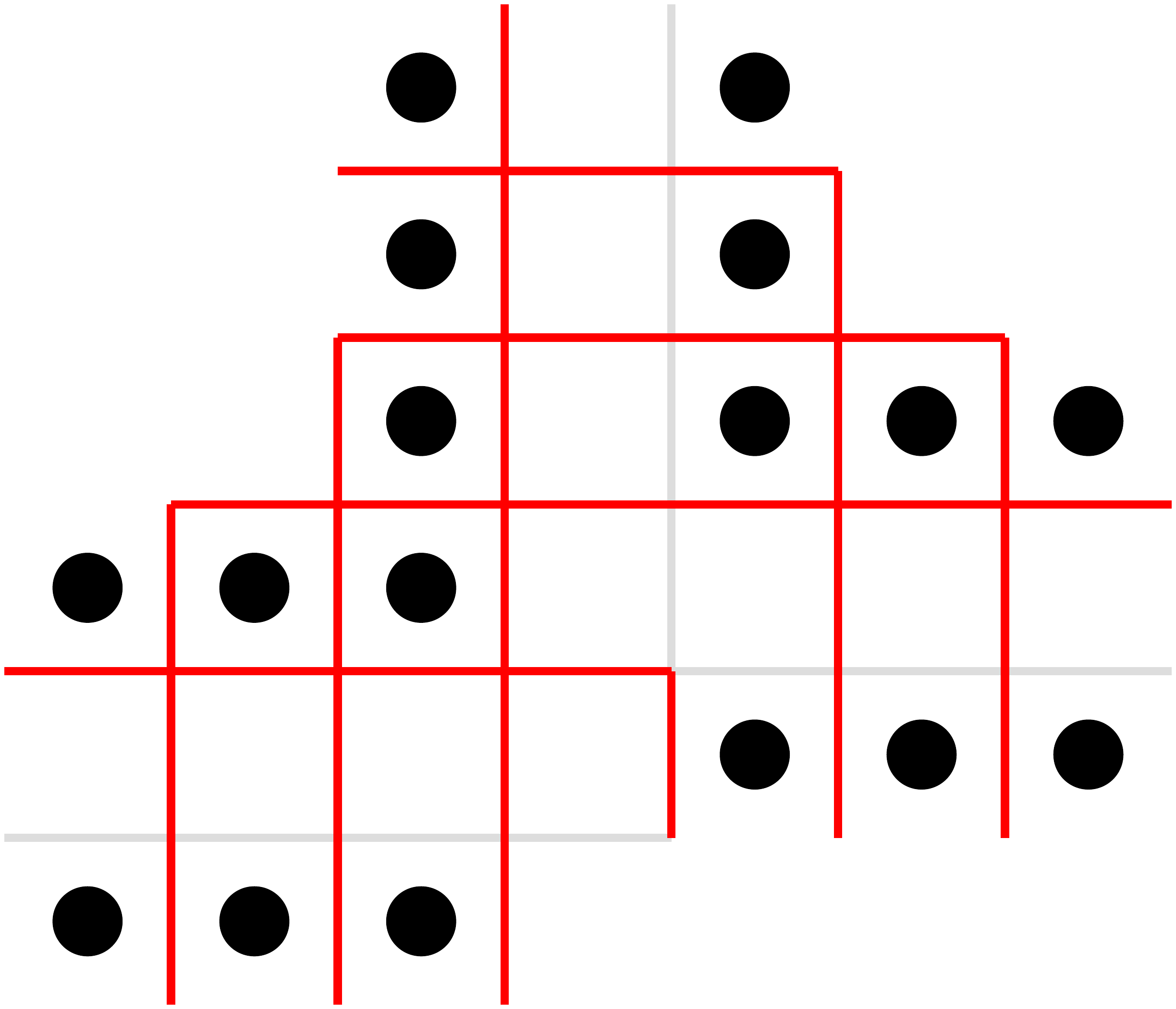}
%\caption{OR gadget, shorter wires. Is this sufficient? We probably still need to add ellipses at the wire ends regardless.}
%\label{fig:or-small}
%\end{figure}

\paragraph*{OR.}
\autoref{fig:or} shows the OR gadget. The upper wire is interpreted as an output, with the left state representing true; the other wires are inputs, with true represented as down on the left wire and up on the right wire. (We can easily adjust truth representations between gadgets with inverters.) If either input is true, then the output may be set true, as shown. If both inputs are false, the output may be set false. \autoref{fig:or-bad} shows that setting the output to true when both inputs are false is not possible: all marked edge states are forced by the wire properties, but two sites are left connected.

\subsection{Hardness}

\begin{theorem}
It is NP-hard to find a grid sona drawing for a given set of sites at the centers of grid cells.
\end{theorem}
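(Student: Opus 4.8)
The plan is to complete the reduction from Planar CNF SAT begun by the gadget constructions of Figures~\ref{fig:wire}--\ref{fig:or}. Given a planar CNF formula $\varphi$ with variables $x_1,\dots,x_m$ and clauses $c_1,\dots,c_k$, together with a planar embedding of its variable--clause incidence graph, I would lay out the gadgets on the grid following this embedding, at a scale large enough that the gadgets and their wires do not interfere. Each variable becomes an unconnected wire end, whose two states (upper line on / lower line off, or vice versa) encode its two truth values; \emph{Split} gadgets duplicate a variable's signal into as many copies as the variable has occurrences, and \emph{Invert} gadgets produce negated literals and reconcile the differing true/false conventions demanded by the wire, OR, and constant gadgets. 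For each clause I would build a binary tree of \emph{OR} gadgets whose leaves are the literal wires of that clause and whose root carries the clause's output value. Because the incidence graph is planar, no wire needs to cross another, so no crossover gadget is required.

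To force satisfaction, I would attach a \emph{Constant} gadget to the root of each clause tree so that the clause output is pinned to the true state. The crucial property, already established for the OR gadget in \autoref{fig:or-bad}, is that its output cannot be set true when both inputs are false without leaving two sites in a common face; hence pinning the root true propagates down the tree and forces at least one literal of the clause to be true. Consequently, a globally consistent assignment of on/off states to the edges of all gadgets exists if and only if there is a truth assignment to $x_1,\dots,x_m$ satisfying every clause, i.e.\ if and only if $\varphi$ is satisfiable.

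The remaining and most delicate step is the global assembly: turning a locally consistent edge-state configuration into an actual grid sona drawing---a single closed curve whose every bounded face contains exactly one site. Here I would connect the unconnected exterior edges lining each gadget by routing the curve through the free space between gadgets, packing that space with additional filler sites in a regular pattern so that (i)~every grid vertex retains even degree, (ii)~every bounded face thereby created contains exactly one site, with none left empty or doubly occupied, and (iii)~the separate loops produced inside gadgets (such as the extra loops noted for the Invert gadget) are merged, together with the connecting edges, into one connected closed curve. I expect this parity-and-connectivity bookkeeping to be the main obstacle, since it must be performed uniformly for \emph{every} satisfying truth assignment and must not introduce any shortcut that would allow an unsatisfied clause to be drawn. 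Finally I would check that the construction has size polynomial in $|\varphi|$ and fits in a polynomial-size grid, and verify both directions of the reduction: a satisfying assignment yields a valid drawing by setting the wires accordingly and completing the exterior, while any valid grid sona drawing induces locally consistent gadget states and hence a satisfying assignment. Since Planar CNF SAT is NP-hard~\cite{Lichtenstein-1982}, the theorem follows.
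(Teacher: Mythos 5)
Your reduction logic---planar layout, splitters and inverters to distribute literals, chained/treed OR gadgets per clause with a Constant gadget pinning the output true, and the equivalence with satisfiability via the forced-state analysis of \autoref{fig:or-bad}---matches the paper's argument. But the theorem is not yet proved, because the step you defer (``the remaining and most delicate step'') is precisely the part of the proof that requires a new idea, and your sketch of it would not go through as stated. The difficulty is that the pattern of loose edges and internal loops inside the gadgets \emph{depends on the wire states}: the Turn/Split/Invert gadget, for instance, produces different internal connectivity in its two states. So you cannot simply ``pack the free space with filler sites'' and route connections, because any fixed routing and fixed placement of filler sites must simultaneously yield a valid single closed curve with exactly one site per bounded face for \emph{every} satisfying assignment---and a face boundary that exists in one state may not exist in the other. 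Saying the bookkeeping ``must be performed uniformly for every satisfying truth assignment'' names the obstacle without overcoming it.

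The paper's resolution has three concrete ingredients you are missing. First, each gadget is bordered with \emph{crenellations} (\autoref{fig:crenellations}), and the turn gadget is designed so that its terminal edges leave the gadget \emph{identically in both states} (\autoref{fig:turn}); this state-invariance is what lets one plan the global routing, and place the extra sites, without knowing the wire states. Second, internal loops are not ``merged'' after the fact: the construction ensures there are none, by arranging that every wire contains matched left and right turns so that every path in the sona graph terminates in unconnected edges at turn gadgets, with interior loose ends escorted to the boundary via pass-through pairs across wires. Third, the final joining into one closed curve is done greedily on the exterior, placing exactly one new site per junction in a widely spaced surrounding grid, with the observation that at most one of a path end's two neighbors belongs to the same path, so a non-closing connection always exists. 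There is also a parity constraint introduced by the crenellations, repaired with an inverting-turn gadget (\autoref{fig:inverting-turn}), which your uniform-scale layout does not anticipate. Without these mechanisms (or substitutes for them), your proposal establishes only that edge states are consistent iff the formula is satisfiable, not that a grid sona drawing---a single closed curve with the one-site-per-face property---exists iff the formula is satisfiable, which is what the theorem asserts.
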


\begin{proof}
%We wire the gadgets together.
Given a CNF Boolean formula with a planar incidence graph, we connect the above gadgets to represent this graph: unconstrained wire ends represent variables, and are connected to splitters and inverters to reach clause constructions. A clause is implemented with chained OR gadgets, with the final output constrained to be true with a constant gadget. By the gadget properties described above, we will be able to consistently choose wire states if and only if the formula is satisfiable.

We must still show that all edges can be joined together into a single closed loop, while retaining the sona properties. Our basic strategy for connecting loose ends is to border each gadget with ``crenellations'', as shown in \autoref{fig:crenellations}. This figure also shows how to pass pairs of path segments across a wire without affecting its internal properties, which we will use to help form a single loop.

\begin{figure}[t]
\centering
\includegraphics[scale=0.105]{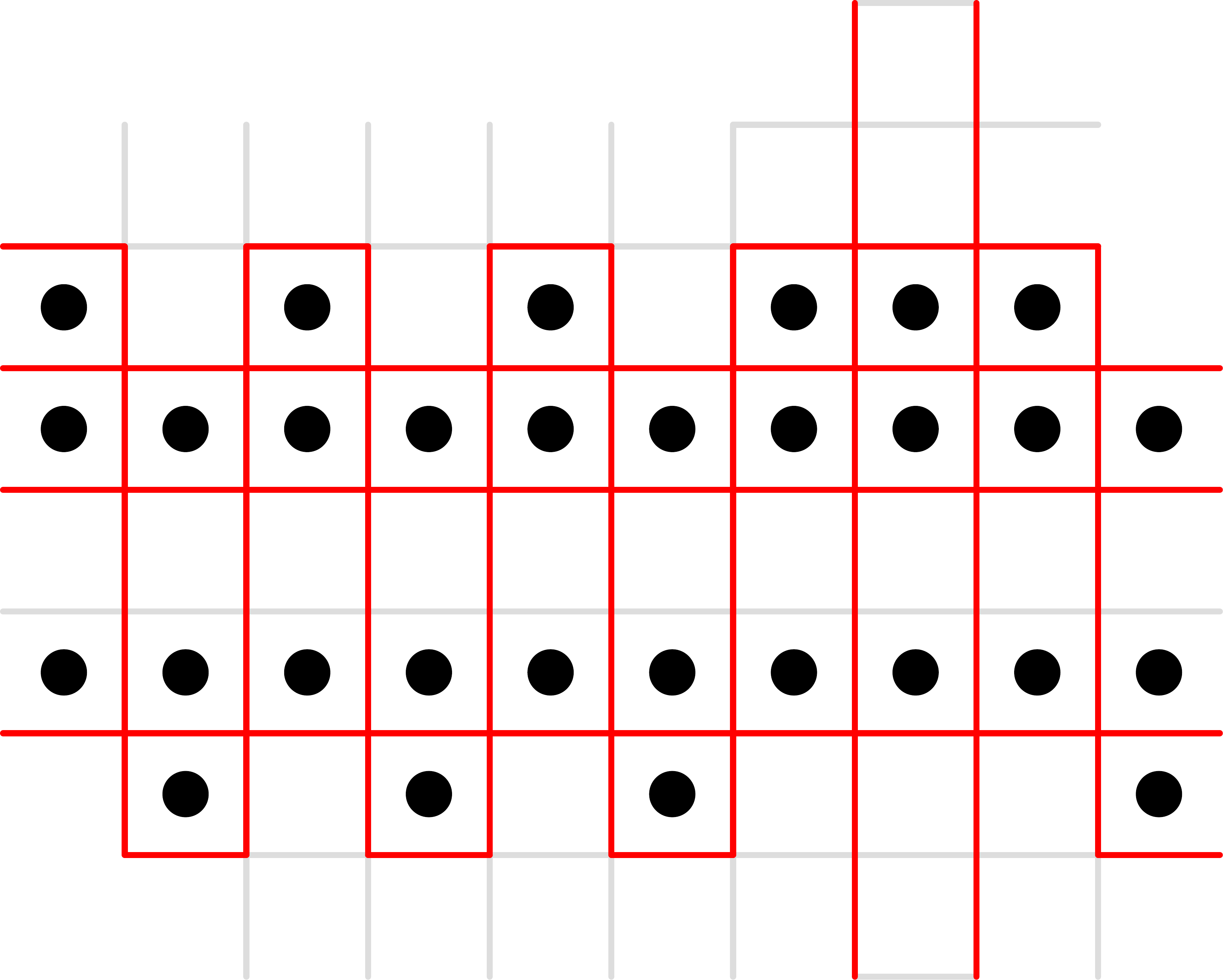}
\caption{Wire gadget with crenellations and pass-through}
\label{fig:crenellations}
\end{figure}

%\begin{figure}[t]
%\centering
%\includegraphics[width=.7\columnwidth]{loose-ends}
%\caption{Ending a wire with unattached edges}
%\label{fig:loose-ends}
%\end{figure}

\begin{figure}[t]
\centering
\subcaptionbox{\label{fig:turn-1} State 1}
  {\includegraphics[scale=0.105]{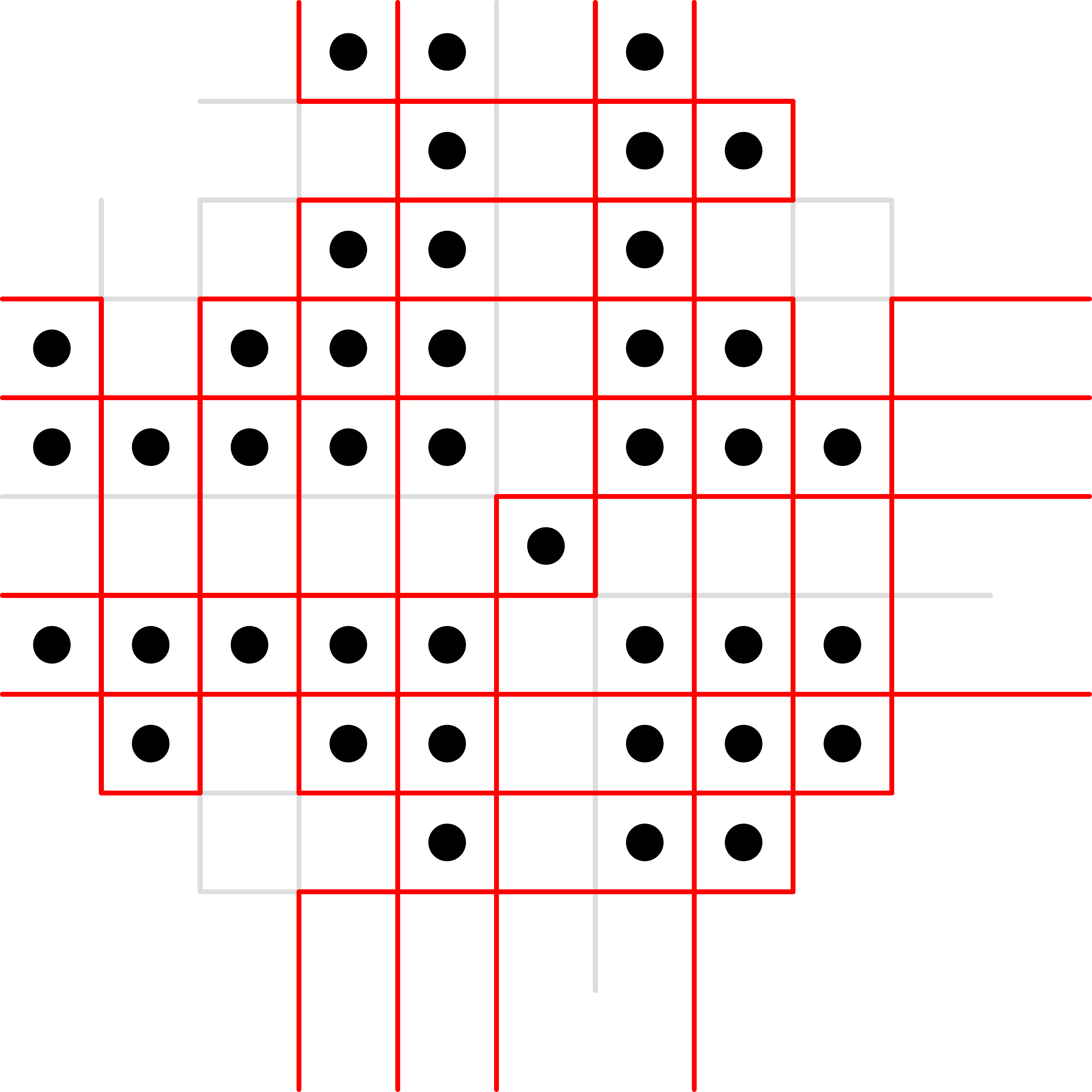}}\hfill
\subcaptionbox{\label{fig:turn-2} State 2}
  {\includegraphics[scale=0.105]{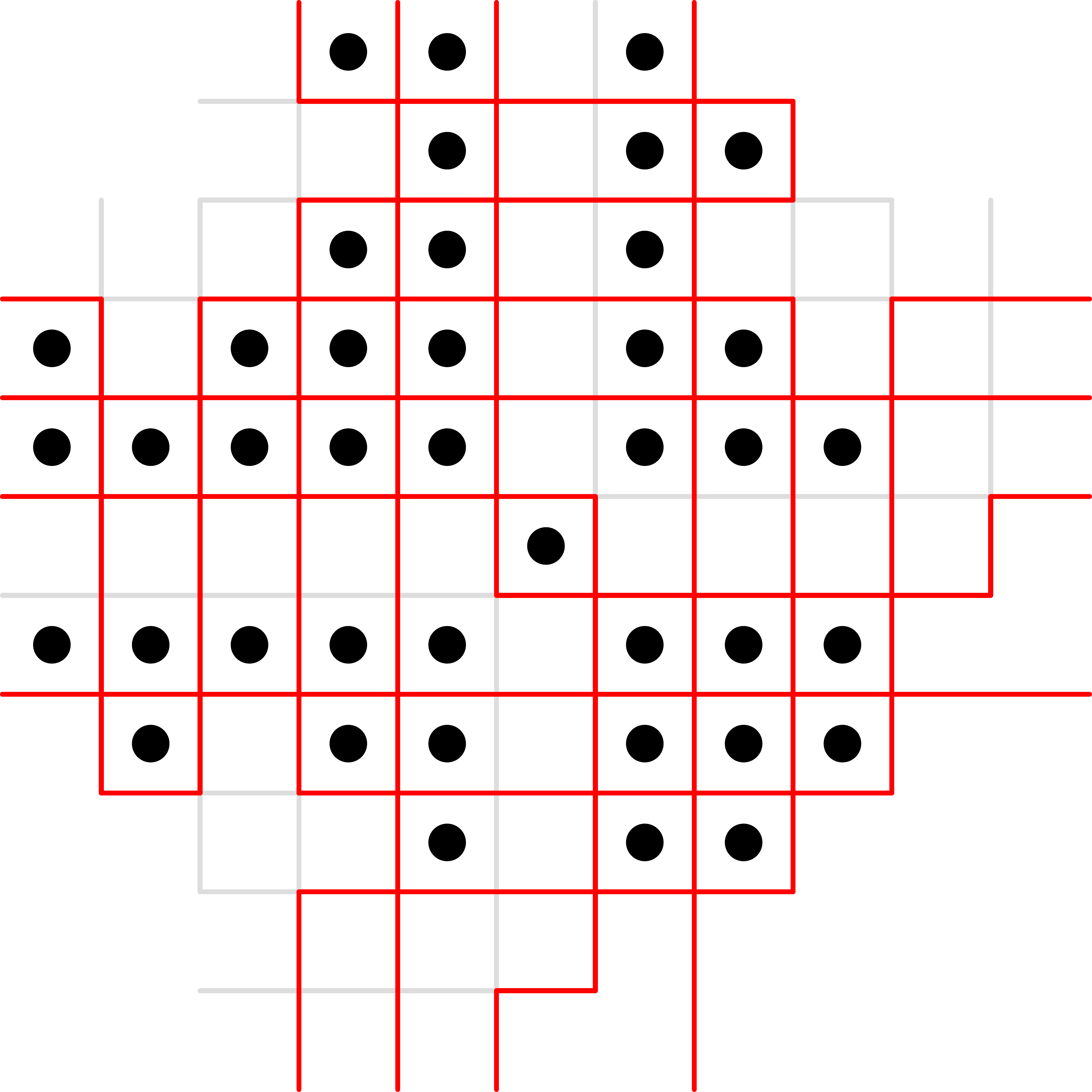}}
\caption{Crenellated turn}
\label{fig:turn}
\end{figure}

% Now, when we connect all the gadgets together, there are no loose ends. However, we may still have more than one loop. We solve this by re-introducing loose ends at every unconstrained wire end, as shown in \autoref{fig:loose-ends}, and then re-connecting them to form a single loop, as described below. 

%At every turn, there are two loose wire ends from the unused ports in \autoref{fig:invert}. 

Adding crenellations to the other gadgets is straightforward, and we defer explicit figures to Appendix~\ref{app:crenellations}, with one exception. (The crenellations do add a parity constraint when wiring gadgets together; we show in the appendix how to shift parity.) When we use the gadget in \autoref{fig:invert} to turn a wire, it will be useful to use the crenellated version in \autoref{fig:turn}. With the connections to other gadgets on the left and top, the right and bottom portions are unconstrained. We can place edges as shown, so that they leave the gadget identically regardless of which state it is in. Then, all paths entering from the left or the top leave on the bottom or the right as loose edges, except that in \autoref{fig:turn-1}, one path connects the left to the top. If we connect a right turn to the top port, this path will also terminate in an unconnected edge. If every wire contains a left turn and matching right turn, then every path in the sona graph must end in two unconnected edges in turn gadgets, because there are no internal loops in any of the gadgets. 

The space occupied by the loose ends of a turn lies either in an internal face of the wiring graph, or on its exterior. We route the interior ends to pass-through pairs as shown in \autoref{fig:crenellations}, so all unconnected edges wind up on the outer border of the graph. Because the terminal edges are placed identically in Figures~\ref{fig:turn-1} and \ref{fig:turn-2}, we can plan their routing without knowing the wire states. As a result, we can place additional sites as required for the property that a single site lies in each internal sona face. (We can lengthen the wires as needed to create additional routing space in the internal faces.)

Now we are in a state where all paths end on the exterior of the construction. If we join these paths together without crossing, the number of extra sites needed in the outer face is just the number of paths. We place that many sites in a widely spaced grid (spacing proportional to number of paths) surrounding the inner construction. Then, we can complete the path greedily by repeatedly connecting one outer path end to one of its neighboring path ends, surrounding one of the added sites. Only one of its two neighboring path ends can come from the same path, so there's always another one to connect to. The wide grid spacing of the outer sites means there is always room to route the connection.
%
% use the following procedure to connect the ends up into a single loop. Pick an unterminated edge. The unterminated edge either to its left or to its right 
%
%Importantly, the internal routing plan, and thus the location of the additional sites in the internal faces and the external space, are known as a function of the formula layout, and do not depend
%
%
\end{proof}

\section*{Acknowledgments}

Thanks to Godfried Toussaint for introducing us (and computational geometry)
to sona drawings.
This research was initiated during the Virtual Workshop on Computational
Geometry held March 20--27, 2020,
which would have been the 35th Bellairs Winter Workshop on
Computational Geometry co-organized by E. Demaine and G. Toussaint
if not for other circumstances.
We thank the other participants of that workshop for helpful discussions
and providing an inspiring atmosphere.

{\raggedright
\bibliographystyle{plainurl}
\bibliography{sona}
}

\appendix
\magicappendix

\section{Crenellations for Grid Drawing}
\label{app:crenellations}

Figures~\ref{fig:crenellated-constant},  \ref{fig:crenellated-variable}, and \ref{fig:crenellated-or} show how to add crenellations to the Constant gadget, an unconstrained wire end (variable), and the OR gadget, respectively. The crenellated Split / Invert is the same as in \autoref{fig:turn}, extended in the obvious way for ports that are used. In no case do the crenellations affect the internal properties described in the main text; these figures simply show that it is possible to add the crenellations appropriately.

As mentioned in the main text, the crenellations do add a parity constraint when connecting gadgets with wires; we can no longer make wires of arbitrary length, but must match the crenellations to the gadgets at each end. In order to do that we need one additional gadget, an inverting turn, shown in \autoref{fig:inverting-turn}. Unlike in \autoref{fig:turn}, the wire state is switched during the turn. Observe that in \autoref{fig:turn}, turning does not change crenellation parity, but the straight-through path, which would invert if not terminated, does change crenellation parity. The inverting turn also does not change crenellation parity. Therefore, to change the crenellation parity of a wire, we can invert it (straight through), changing the parity, and add a sequence left inverting turn, right turn, right turn, left turn to restore the original line of the wire.

\begin{figure}[p]
\centering
\includegraphics[scale=0.13]{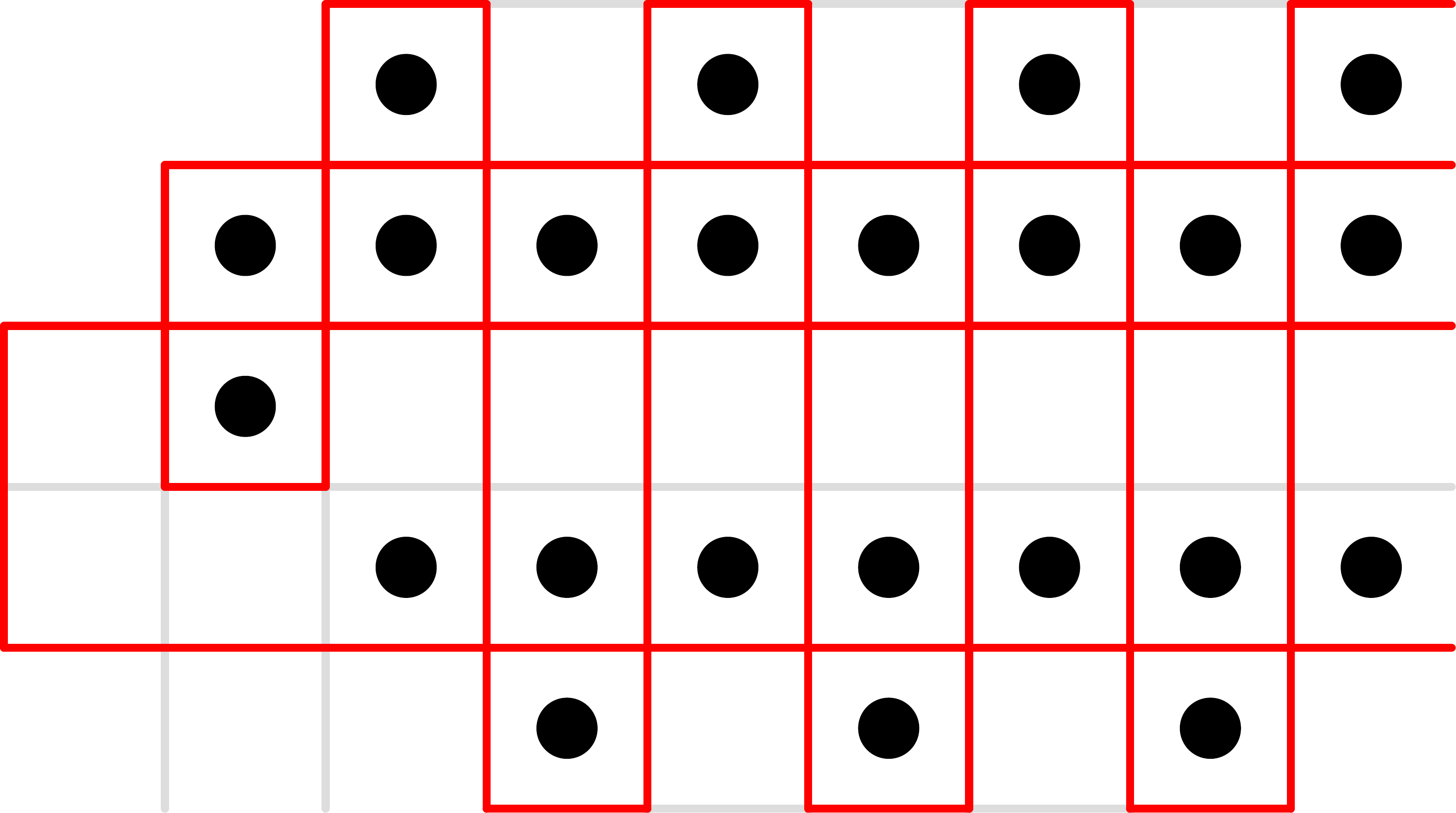}
\caption{Crenellated Constant gadget}
\label{fig:crenellated-constant}
\end{figure}

\begin{figure}[t]
\centering
\subcaptionbox{State 1}
  {\includegraphics[scale=0.13]{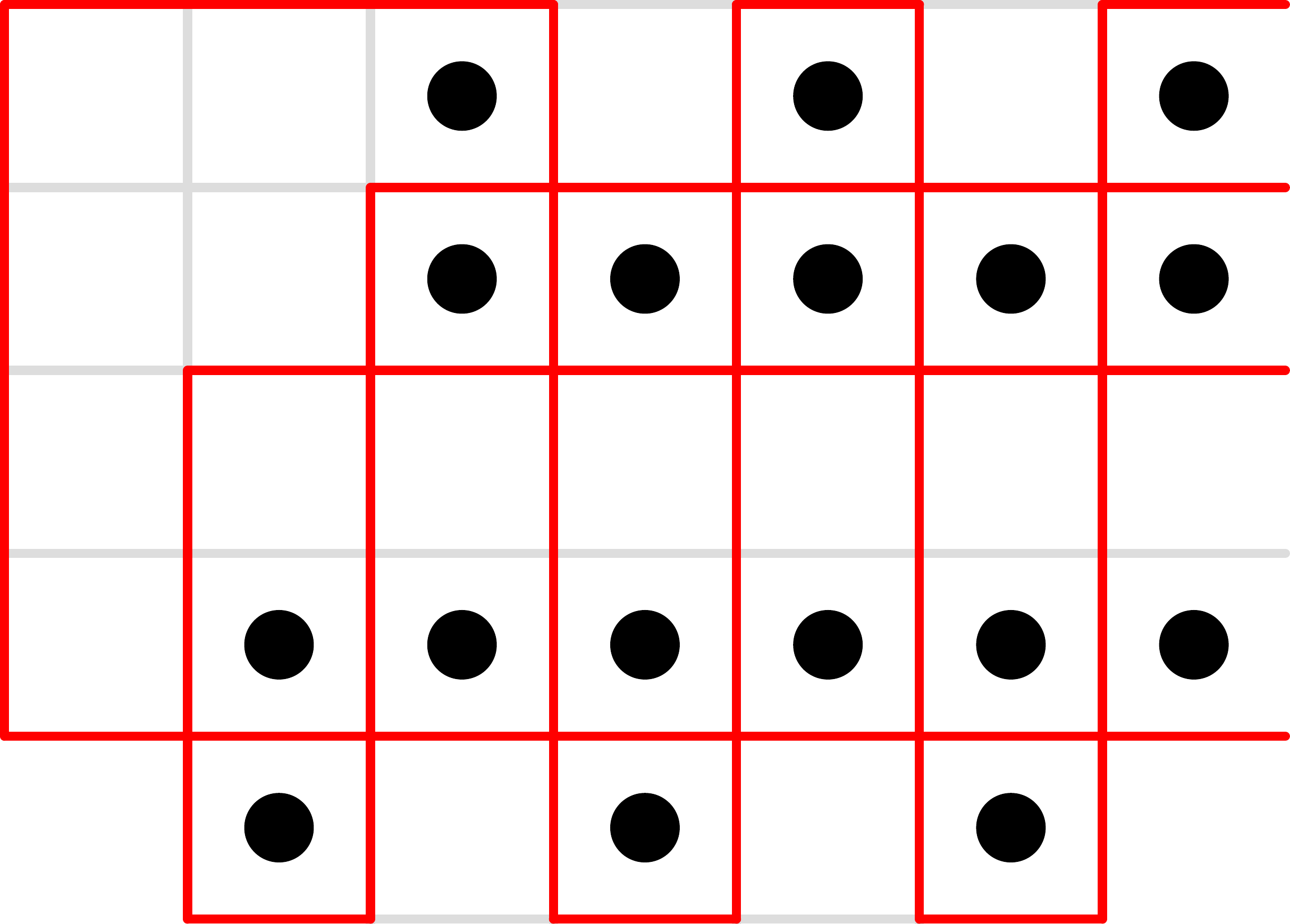}}\hfil\hfil
\subcaptionbox{State 2}
  {\includegraphics[scale=0.13]{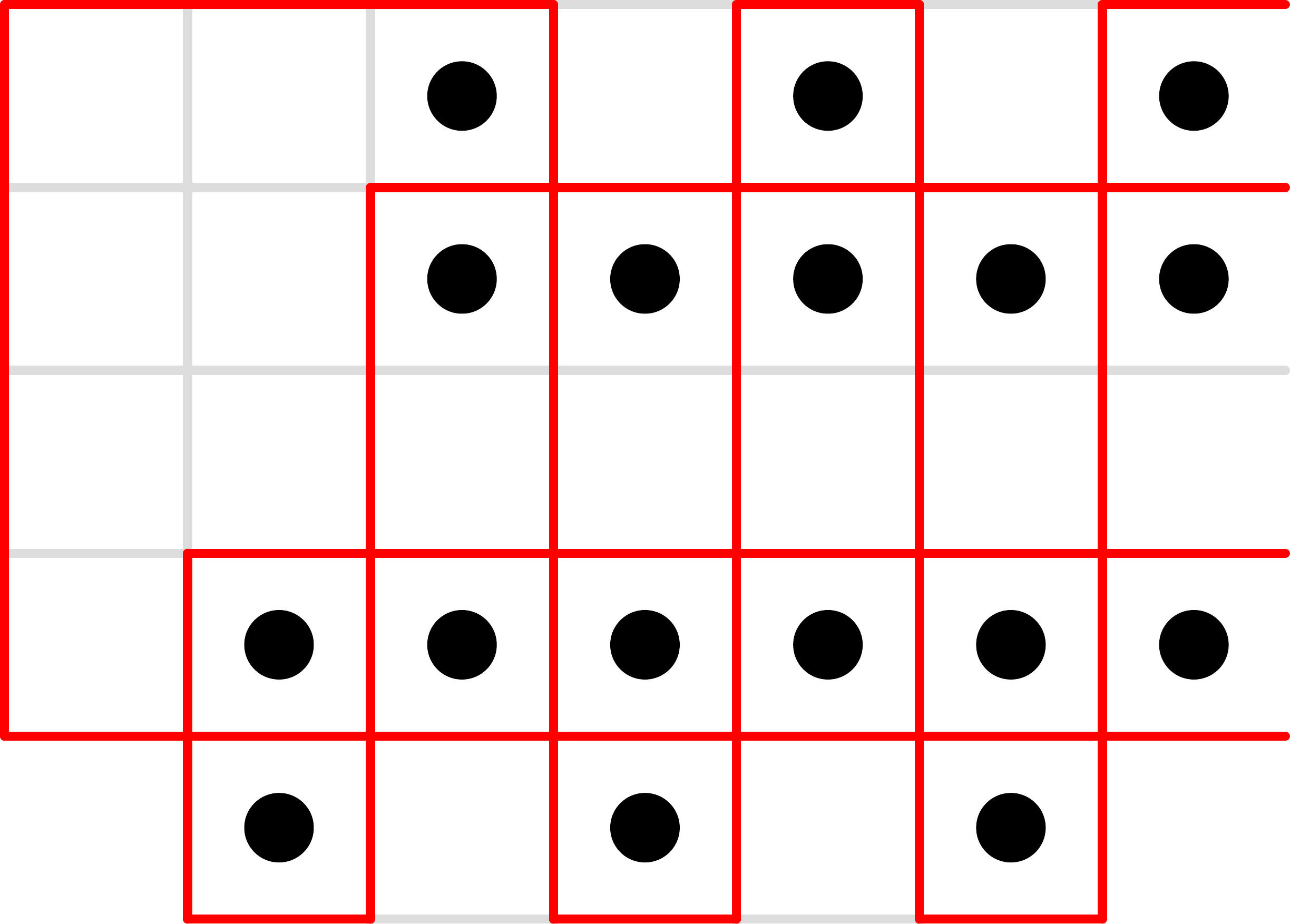}}
\caption{Crenellated unconstrained wire end}
\label{fig:crenellated-variable}
\end{figure}

\begin{figure}[t]
\centering
\subcaptionbox{\label{fig:crenellated-or-1} false + true $\rightarrow$ true}
  {\includegraphics[scale=0.13]{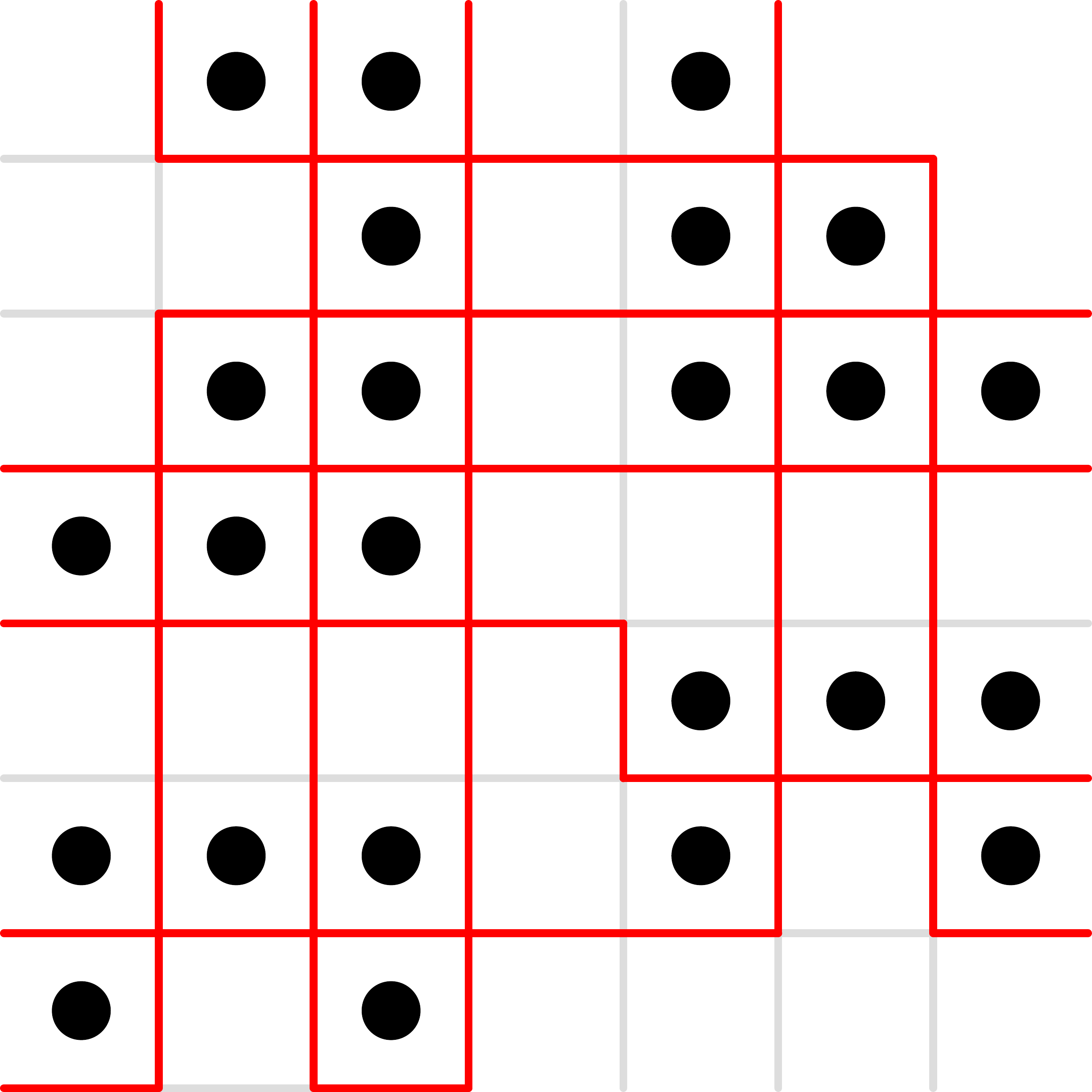}}\hfil\hfil
\subcaptionbox{\label{fig:crenellated-or-2} true + true $\rightarrow$ true}
  {\includegraphics[scale=0.13]{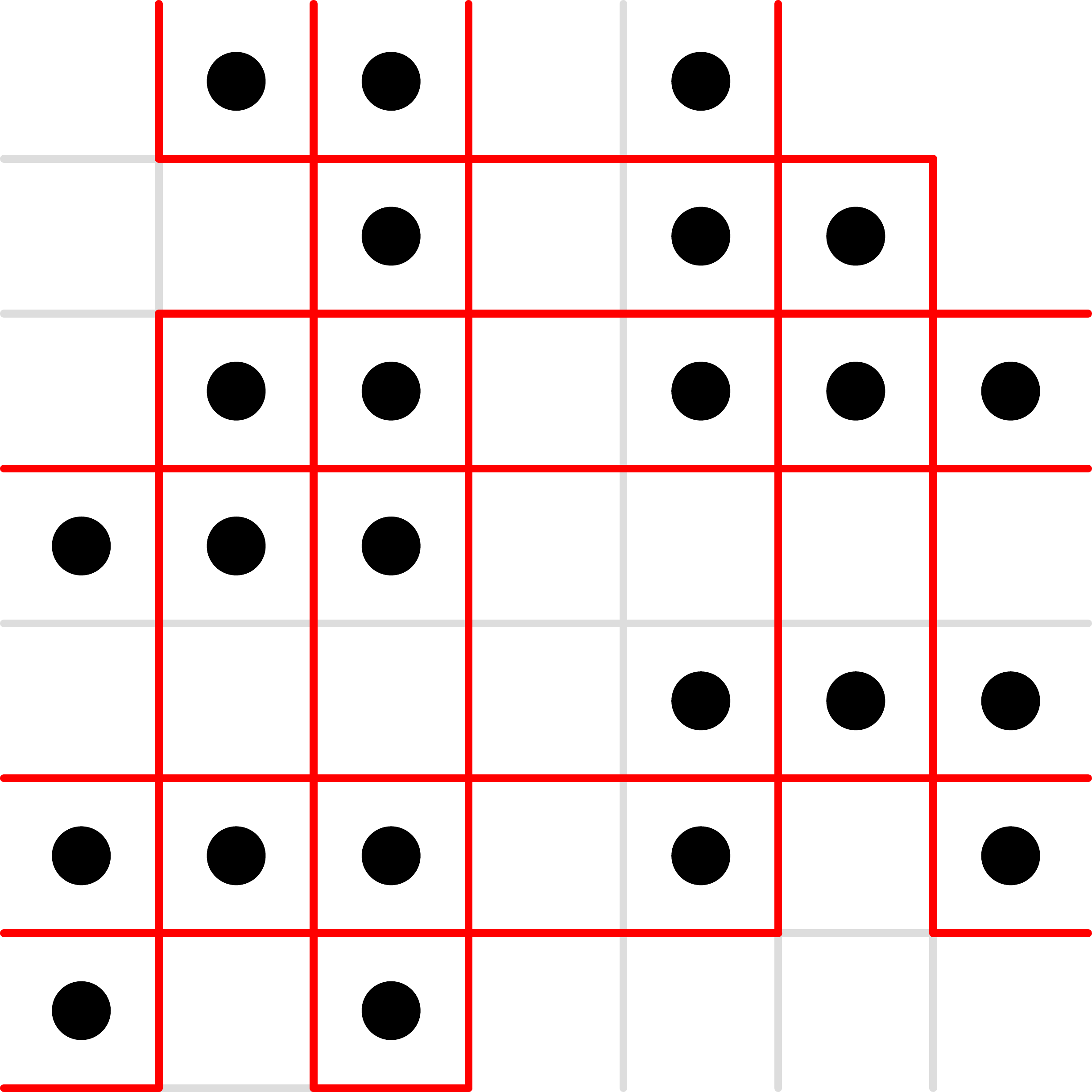}}

\medskip

\subcaptionbox{\label{fig:crenellated-or-3} true + false $\rightarrow$ true}
  {\includegraphics[scale=0.13]{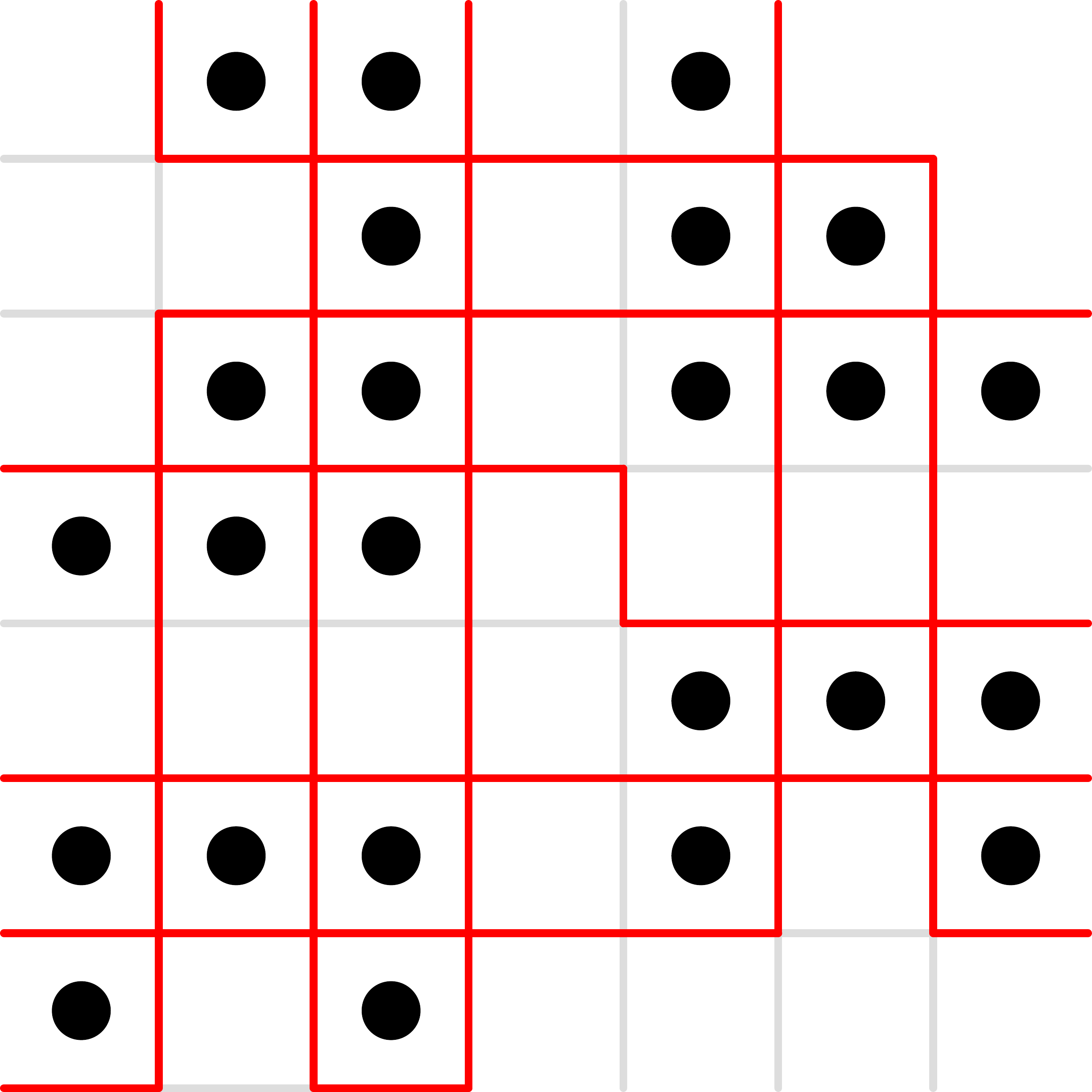}}\hfil\hfil
\subcaptionbox{\label{fig:crenellated-or-4} false + false $\rightarrow$ false}
  {~\includegraphics[scale=0.13]{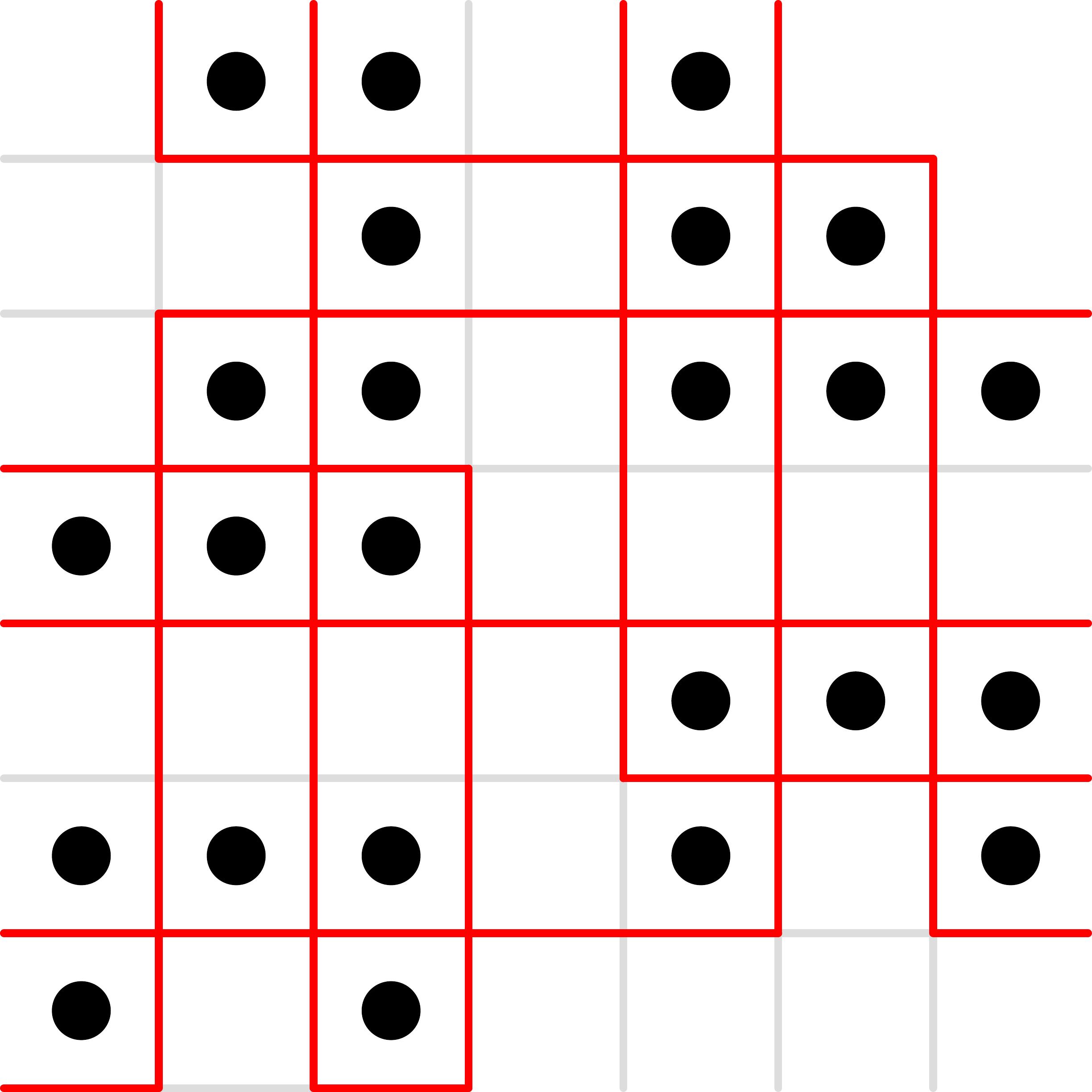}~}
\caption{Crenellated OR gadget}
\label{fig:crenellated-or}
\end{figure}

\begin{figure}[t]
\centering
\includegraphics[scale=0.13]{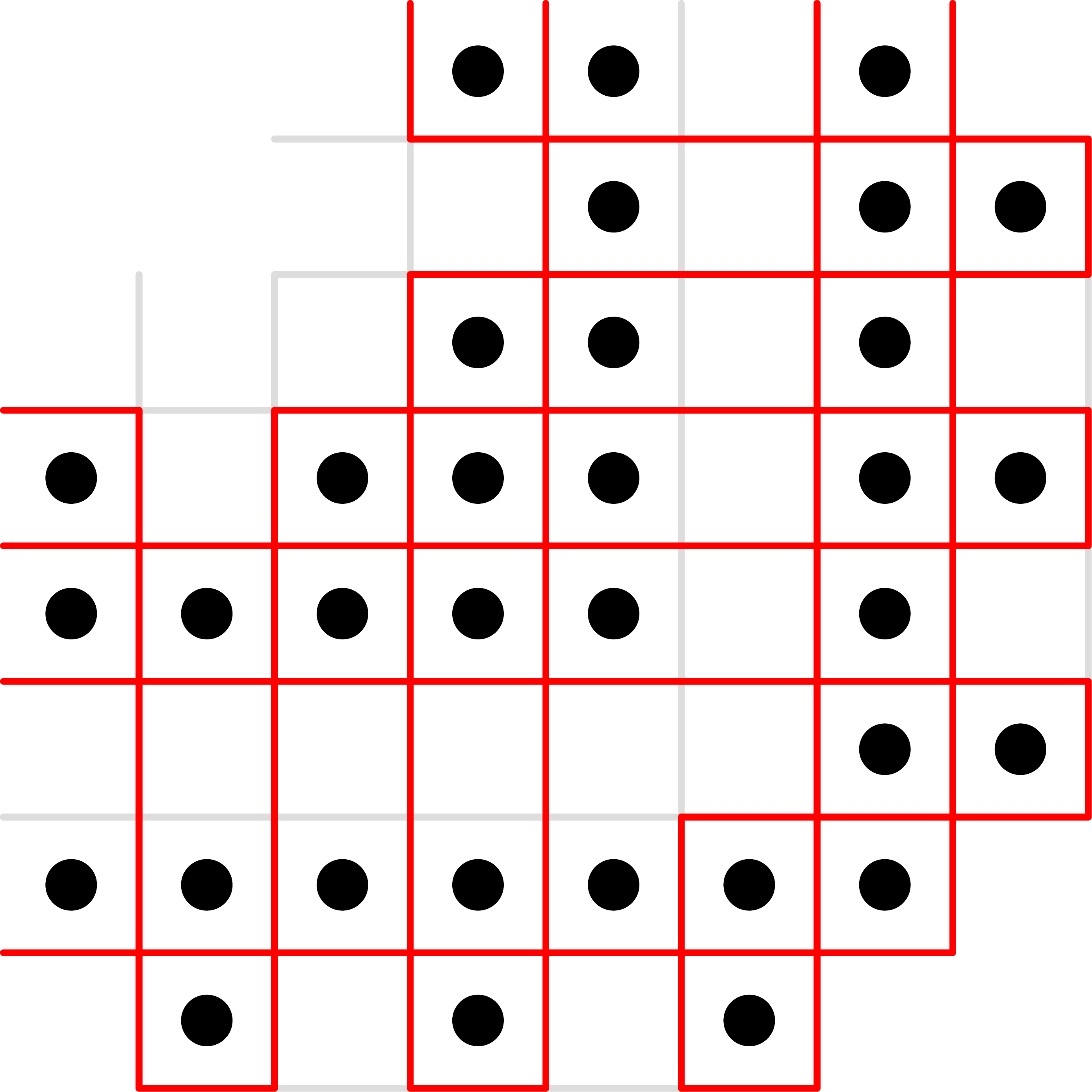}
\caption{Crenellated inverting Turn gadget}
\label{fig:inverting-turn}
\end{figure}

\end{document}